\definecolor{darkred}{rgb}{0.8,0.1,0.1}
\theoremstyle{plain}
\newtheorem{theo}{Theorem}[section]
\newtheorem{lem}[theo]{Lemma}
\newtheorem{propo}[theo]{Proposition}
\theoremstyle{definition}
\newtheorem{defi}[theo]{Definition}
\newenvironment{ex}
{\pushQED{\qed}\exx}
{\popQED\endexx}
\newenvironment{rem}
{\pushQED{\qed}\remm}
{\popQED\endremm}
\numberwithin{equation}{section}
\def\nn{\nonumber}
\def\bbK{\mathbb{K}}
\def\bbR{\mathbb{R}}
\def\bbC{\mathbb{C}}
\def\bbZ{\mathbb{Z}}
\def\ii{{\,{\rm i}\,}}
\def\Hom{\mathrm{Hom}}
\def\Imm{\mathrm{Im}}
\def\Ker{\mathrm{Ker}}
\def\Sym{\mathrm{Sym}}
\def\id{\mathrm{id}}
\def\supp{\mathrm{supp}}
\def\dd{\mathrm{d}}
\def\vol{\mathrm{vol}}
\def\cc{\mathrm{c}}
\def\CS{\mathrm{CS}}
\def\MW{\mathrm{MW}}
\def\1{I}
\def\oone{\mathbbm{1}}
\def\op{\mathrm{op}}
\def\BV{\mathrm{BV}}
\def\Loc{\mathbf{Loc}}
\def\Vec{\mathbf{Vec}}
\def\Ch{\mathbf{Ch}}
\def\dgAlg{\mathbf{dgAlg}}
\def\dgCAlg{\mathbf{dgCAlg}}
\def\PCh{\mathbf{PoCh}}
\def\VBun{\mathbf{VBun}}
\def\AQFT{\mathbf{AQFT}_m}
\def\tPFA{\mathbf{tPFA}_m}
\def\FFF{\mathfrak{F}}
\def\ttt{\mathfrak{t}}
\def\A{\mathcal{A}}
\def\B{\mathcal{B}}
\def\F{\mathcal{F}}
\def\G{\mathcal{G}}
\def\Q{\mathcal{Q}}
\def\FFFF{\FFF}
\newcommand\und[1]{\underline{#1}}
\newcommand{\pair}[1]{\{\!\{#1\}\!\}}
\def\sk{\vspace{2mm}}
\let\@fnsymbol\@alph
\title{%
Quantization of Lorentzian free BV theories:\\ factorization algebra vs algebraic quantum field theory
}
\author{%
Marco Benini$^{1,2,a}$, 
Giorgio Musante$^{1,b}$\ and\ 
Alexander Schenkel$^{3,c}$\vspace{4mm}\\
{\small ${}^1$ Dipartimento di Matematica, Dipartimento di Eccellenza 2023-27, Universit\`a di Genova,}\\
{\small Via Dodecaneso 35, 16146 Genova, Italy.}\vspace{2mm}\\
{\small ${}^2$ INFN, Sezione di Genova,}\\
{\small Via Dodecaneso 33, 16146 Genova, Italy.}\vspace{2mm}\\
{\small ${}^3$ School of Mathematical Sciences, University of Nottingham,}\\
{\small University Park, Nottingham NG7 2RD, United Kingdom.}\vspace{4mm}\\
{\small \begin{tabular}{ll}
Email: & ${}^a$~\texttt{marco.benini@unige.it}\\
& ${}^b$~\texttt{musante@dima.unige.it}\\
& ${}^c$~\texttt{alexander.schenkel@nottingham.ac.uk}
\end{tabular}
}
}
\date{February 2024}
\begin{document}

\maketitle

\begin{abstract}
\noindent We construct and compare two alternative quantizations, as a time-orderable prefactorization algebra and as an algebraic quantum field theory valued in cochain complexes, of a natural collection of free BV theories on the category of $m$-dimensional globally hyperbolic Lorentzian manifolds. Our comparison is realized as an explicit isomorphism of time-orderable prefactorization algebras. The key ingredients of our approach are the retarded and advanced Green's homotopies associated with free BV theories, which generalize retarded and advanced Green's operators to cochain complexes of linear differential operators. 
\end{abstract}

\paragraph*{Keywords:} factorization algebras, 
algebraic quantum field theories, 
homological methods in gauge theory, 
globally hyperbolic Lorentzian manifolds, 
Green hyperbolic operators. 
\vspace{-2mm}

\paragraph*{MSC 2020:} 81T70, 81T20, 58J45. 
\vspace{-2mm}

\renewcommand{\baselinestretch}{0.8}\normalsize
\tableofcontents
\renewcommand{\baselinestretch}{1.0}\normalsize

\newpage

%%%%%%%%%%%%%%%%%%%%%%%%%%%%%%%%%%%%%%%%%%%%%%%%
%%%%%%%%%%%%%%%%%%%%%%%%%%%%%%%%%%%%%%%%%%%%%%%%

\section{\label{sec:intro}Introduction and summary}
Several mathematical axiomatizations of quantum field theory (QFT) 
on Lorentzian manifolds have been proposed in the literature, 
such as algebraic quantum field theories (AQFTs) 
\cite{BFV, FV, operads_aqft} and time-orderable prefactorization algebras 
\cite{FA-vs-AQFT}, i.e.\ a Lorentzian variant of prefactorization algebras 
\cite{CG, CG2}. These two approaches are a priori quite different. 
For instance, while the former emphasizes the algebraic 
structure carried by the quantum observables on each spacetime, 
the latter focuses on their time-ordered products. 
The differences become even more striking when one tries 
to construct simple QFT models, such as the free Klein-Gordon field 
of mass $m \geq 0$: 
while the corresponding time-orderable prefactorization algebra 
is constructed out of the $(-1)$-shifted Poisson structure (antibracket)
$\tau_{(-1)}(\varphi \otimes \varphi^\ddagger) := \int_M \varphi \varphi^\ddagger\, \vol_M$ only, 
the corresponding AQFT relies crucially also on the retarded 
and advanced Green's operators $G_\pm$ for the Klein-Gordon operator $\Box + m^2$
through the unshifted Poisson structure 
$\tau_{(0)}(\varphi_1 \otimes \varphi_2) := \int_M \varphi_1 (G_+ - G_-) \varphi_2\, \vol_M$. 
\sk

Because of these differences it is interesting to compare
time-orderable prefactorization algebras and AQFTs.
This task was undertaken first in a model-based approach 
by \cite{GwilliamRejzner} and then in a model-independent fashion 
by \cite{FA-vs-AQFT}. 
In \cite{GwilliamRejzner} it is shown that the time-orderable
prefactorization algebra and the AQFT of the free
Klein-Gordon field encode equivalent information as a consequence
of the {\it time-slice axiom}, i.e.\ the property that
any spacetime embedding whose image contains a Cauchy surface of the
codomain induces an isomorphism at the level of quantum observables.
(The results of \cite{GwilliamRejzner} can be adapted with minor
modifications to encompass any field theoretic model that is ruled
by a Green hyperbolic operator.)
In \cite{FA-vs-AQFT} a model-independent comparison is developed in the
form of an equivalence (actually isomorphism) between the categories
of time-orderable prefactorization algebras and of AQFTs,
both satisfying the time-slice axiom (and an additional technical
requirement, called additivity, that is fulfilled by many examples).
\sk

Unfortunately, the results of \cite{GwilliamRejzner} and \cite{FA-vs-AQFT} 
do not cover the examples of linear gauge theories. 
On the one hand, the equation of motion of a linear gauge theory 
(with gauge transformations acting non-trivially) must be degenerate. 
In particular, the corresponding linear differential operator is not 
a Green hyperbolic operator, see \cite{Bar} and also Definition~\ref{defi:Gpm}. 
As a consequence, the results of \cite{GwilliamRejzner} 
cannot be applied directly. 
On the other hand, linear gauge theories are most naturally 
encoded by cochain complexes in the spirit of the BV formalism, 
see \cite{CG, CG2, Hol, FR, FR2, BSreview}. In this context 
a weaker version of the time-slice axiom holds, 
where isomorphisms are replaced by quasi-isomorphisms,
see \cite{linYM} and also Examples~\ref{ex:tPFA} and~\ref{ex:AQFT}.
Motivated by this fact, linear gauge theories on Lorentzian manifolds 
are realized by means of time-orderable prefactorization algebras or AQFTs
that take values in the $\infty$-category $\Ch_\bbC$
of cochain complexes with equivalences given by quasi-isomorphisms,
see Definitions~\ref{defi:tPFA} and~\ref{defi:AQFT} 
and also Remark~\ref{rem:hAQFT}. 
\sk

While we are currently not able to upgrade 
the model-independent comparison of \cite{FA-vs-AQFT} 
to the case where the target is the $\infty$-category $\Ch_\bbC$, 
with the present paper we extend the results 
of \cite{GwilliamRejzner} to linear gauge (and also higher gauge) theories. 
The key ingredient to achieve this goal is a 
generalization of Green hyperbolic operators, namely 
the recently developed {\it Green hyperbolic complexes} \cite{GreenHyp}. 
In contrast to Green hyperbolic operators, Green hyperbolic 
complexes cover many important examples of linear gauge theories, 
see \cite{GreenHyp} and also Examples~\ref{ex:P}, \ref{ex:CS} 
and~\ref{ex:MW}. Their key feature is that they admit 
{\it retarded and advanced Green's homotopies} 
$\Lambda_\pm$, generalizing the familiar retarded and advanced Green's 
operators $G_\pm$ for Green hyperbolic operators. 
\sk

The input of our construction is a {\it free BV theory} $(F,Q,(-,-),W)$ 
on an $m$-dimensional oriented and time-oriented globally hyperbolic 
Lorentzian manifold $M \in \Loc_m$, consisting 
of a {\it complex of linear differential operators} $(F,Q)$ 
with a compatible {\it $(-1)$-shifted fiber metric} $(-,-)$ 
and a {\it (formally self-adjoint) Green's witness}, 
see Definitions~\ref{defi:CLDO-fibermetric}, \ref{defi:witness} 
and~\ref{defi:BV}. Let us provide some interpretation of these data 
and some information about the structures that can be defined out of it. 
In the spirit of the BV formalism one may think of 
the graded vector bundle $F$ as encoding both gauge and ghost fields, 
as well as the respective antifields. In the same spirit the differential $Q$, 
which is degree-wise a linear differential operator, encodes 
both the action of gauge transformations and the equation of motion. 
The compatible $(-1)$-shifted fiber metric $(-,-)$ is a suitable
generalization of the more familiar concept of a fiber metric on a
vector bundle. $(-,-)$ is closely related to the antibracket 
from the BV formalism in the sense that, upon integration, it defines 
the {\it $(-1)$-shifted Poisson structure} $\tau_{(-1)}$ 
on the $1$-shift $\FFF_\cc(M)[1] \in \Ch_\bbR$ of the cochain complex 
of compactly supported smooth sections of $(F,Q)$, 
see \eqref{eqn:shiftedPoiss}. Finally, the role of the Green's witness 
$W$ is to give rise to the Green hyperbolic operator $P := Q\, W + W\, Q$, 
which allows one to find particularly simple retarded and advanced Green's 
homotopies $\Lambda_\pm := W\, G_\pm$, where $G_\pm$ denote the retarded and 
advanced Green's operator for $P$. In this sense $W$ ``witnesses'' 
the fact that $(F,Q)$ is a Green hyperbolic complex. 
Taking the difference of $\Lambda_+$ and $\Lambda_-$ defines the {\it retarded-minus-advanced cochain map}
$\Lambda := \Lambda_+ - \Lambda_-$ and taking their average defines 
the {\it Dirac homotopy} $\Lambda_D := \tfrac{1}{2}(\Lambda_+ + \Lambda_-)$, 
which generalize the familiar retarded-minus-advanced $G := G_+ - G_-$ 
and Dirac $G_D := \tfrac{1}{2}(G_+ + G_-)$ propagators. 
In combination with the $(-1)$-shifted fiber metric $(-,-)$, 
$\Lambda$ and $\Lambda_D$ define, upon integration, 
the {\it unshifted Poisson structure} $\tau_{(0)}$ 
and respectively the {\it Dirac pairing} $\tau_D$ on the cochain complex 
$\FFF_\cc(M)[1] \in \Ch_\bbR$ of compactly supported smooth sections, 
see \eqref{eqn:Poiss} and \eqref{eqn:pairDirac}. 
The $(-1)$-shifted Poisson structure $\tau_{(-1)}$ plays a crucial role 
in the first step of our construction 
(quantization as a time-orderable prefactorization algebra), 
the unshifted Poisson structure $\tau_{(0)}$ 
in the second step (quantization as an AQFT) 
and the Dirac pairing $\tau_D$ in the last step (comparison). 
\sk

In the first step, which is carried out in Subsection~\ref{subsec:BV-quant}, 
we construct a time-orderable prefactorization algebra $\F \in \tPFA$
out of a collection $(F_M,Q_M,(-,-)_M,W_M)_{M \in \Loc_m}$ of free BV theories
that is natural with respect to the morphisms $f: M \to N$ in $\Loc_m$
(see Appendix~\ref{app:natural} for the technical details). 
The first part of this construction relies only on the complexes of linear
differential operators $(F_M,Q_M)$ and on the compatible $(-1)$-shifted fiber 
metrics $(-,-)_M$, for all $M \in \Loc_m$. These data 
are used to define the $(-1)$-shifted Poisson structures $\tau_{(-1)}$, 
whose BV quantization provides the time-orderable prefactorization 
algebra $\F$ of interest to us. Explicitly, from $\tau_{(-1)}$ 
we define the {\it BV Laplacian} $\Delta_{\BV}$ 
on the symmetric algebra $\Sym(\FFF_\cc(M)[1]) \in \dgCAlg_\bbC$, 
see \eqref{eqn:BV-Laplacian}, and then we deform the original 
differential $\Q$ to the quantized differential 
$\Q_\hbar := \Q + \ii \hbar\, \Delta_{\BV}$. 
Even though $\Q_\hbar$ is not compatible with the commutative 
multiplication $\mu$ of the symmetric algebra 
$\Sym(\FFF_\cc(M)[1]) \in \dgCAlg_\bbC$, it is compatible with the 
{\it time-ordered products} constructed out of $\mu$, 
see Proposition~\ref{propo:t-ord-prod}. 
Hence, by defining for all $M \in \Loc_m$ the cochain complexes 
$\F(M) := (\Sym(\FFF_\cc(M)[1]), \Q_\hbar) \in \Ch_\bbC$  
that consist of the graded vector space underlying 
$\Sym(\FFF_\cc(M)[1])$ with the quantized differential 
$\Q_\hbar$, we obtain the time-orderable prefactorization algebra 
$\F$ with time-ordered products constructed out of 
the symmetric algebra multiplication $\mu$. 
At this point it is unclear whether $\F$ fulfills the time-slice axiom. 
The Green's witnesses $W_M$ become crucial for this purpose, 
see Theorem~\ref{th:time-slice-cl} and  
Proposition~\ref{propo:tPFA-time-slice}. 
\sk

In the second step, which is carried out in Subsection~\ref{subsec:star-prod}, 
we construct an AQFT $\A \in \AQFT$ out of the same data. 
Explicitly, instead of using $\tau_{(-1)}$ to deform the differential, 
here one uses the unshifted Poisson structure $\tau_{(0)}$ 
to deform the commutative multiplication $\mu$ 
of the symmetric algebra $\Sym(\FFF_\cc(M)[1]) \in \dgCAlg_\bbC$ to the 
(in general non-commutative) {\it Moyal-Weyl star product $\mu_\hbar$}, 
see \eqref{eqn:star-prod}. The deformed multiplication $\mu_\hbar$ is compatible with the original 
differential $\Q$ and with the pushforward of compactly supported 
sections along $\Loc_m$-morphisms. Hence, we obtain the AQFT $\A$ 
by defining for all $M \in \Loc_m$ the differential graded 
algebras $\A(M) := (\Sym(\FFF_\cc(M)[1]),\mu_\hbar,\oone) \in \dgAlg_\bbC$ 
that consist of (the cochain complex underlying) 
$\Sym(\FFF_\cc(M)[1])$ with the Moyal-Weyl star product $\mu_\hbar$ 
and the unit $\oone \in \Sym(\FFF_\cc(M)[1])$,
and extending the pushforward of compactly supported sections. 
\sk

In the last step, which is carried out in Subsection~\ref{subsec:comp}, 
we compare the time-orderable prefactorization algebra 
$\F \in \tPFA$ and the AQFT $\A \in \AQFT$ obtained in the previous steps. Explicitly, 
we construct a comparison isomorphism $T: \F \to \F_\A$ in $\tPFA$
between $\F \in \tPFA$ and the time-orderable prefactorization algebra 
$\F_\A \in \tPFA$ associated with $\A \in \AQFT$, whose time-ordered products 
are constructed out of the Moyal-Weyl star product $\mu_\hbar$. 
($\F_\A$ is just the evaluation on $\A$ of the functor 
$\AQFT \to \tPFA$ from \cite{FA-vs-AQFT}.) 
The comparison isomorphism $T := \exp(\ii \hbar\, \Delta_D)$ 
is defined as the exponential of the {\it Dirac Laplacian} 
$\Delta_D$, which is obtained from the Dirac pairing $\tau_D$, 
see Theorem~\ref{th:iso}. In particular, we show that $T$ intertwines 
the quantized differential $\Q_\hbar$ with the original symmetric algebra differential $\Q$ 
and the time-ordered products constructed out of the original symmetric algebra multiplication $\mu$ 
with those constructed out of the quantized multiplication $\mu_\hbar$. 
\sk

The outline of the rest of the paper is the following. 
Section~\ref{sec:prelim} contains the background material needed later on. 
In particular, Subsection~\ref{subsec:Ch} reviews some basic aspects 
of the theory of cochain complexes $\Ch_\bbK$ over a field $\bbK$ 
of characteristic zero; Subsection~\ref{subsec:pairing} describes the 
extension of (anti-)symmetric pairings $\tau$ of degree $p \in \bbZ$ 
on a cochain complex $V \in \Ch_{\bbK}$ to suitable bi-derivations 
$\pair{-,-}_\tau$ and, in the symmetric case, 
to suitable Laplacians $\Delta_\tau$ 
on the symmetric algebra $\Sym\,V\in\dgCAlg_\bbK$; 
Subsection~\ref{subsec:Green} recalls some relevant concepts 
from Lorentzian geometry and Green hyperbolic operators; 
Subsection~\ref{subsec:AQFTandtPFA} reviews the concepts 
of time-orderable prefactorization algebras and AQFTs valued 
in cochain complexes $\Ch_\bbC$, including the Einstein causality 
and time-slice axioms (the latter in the form of a quasi-isomorphism). 
Section~\ref{sec:Witness} focuses on the concept of a Green's witness 
and on the structures that can be constructed out of it. 
More in detail, Subsection~\ref{subsec:taus} recalls the concepts 
of a complex of linear differential operators $(F,Q)$, 
of a compatible $(-1)$-shifted fiber metric $(-,-)$ and 
of a (formally self-adjoint) Green's witness $W$, 
which together form a free BV theory $(F,Q,(-,-),W)$ 
on $M \in \Loc_m$, and out of these data it constructs 
the $(-1)$-shifted Poisson structure $\tau_{(-1)}$, 
the unshifted Poisson structure $\tau_{(0)}$ and 
the Dirac pairing $\tau_D$; Subsection~\ref{subsec:properties} 
investigates the properties of the structures 
$\tau_{(-1)}$, $\tau_{(0)}$ and $\tau_D$ associated with a natural 
collection $(F_M,Q_M,(-,-)_M,W_M)_{M \in \Loc_m}$ of free BV theories, 
proving in particular classical analogs of the Einstein causality 
and time-slice axioms, see Theorem~\ref{th:time-slice-cl}. 
The core of the paper is Section~\ref{sec:Quant}, which is devoted to 
the construction and comparison of two alternative quantizations 
of a natural collection $(F_M,Q_M,(-,-)_M,W_M)_{M \in \Loc_m}$ 
of free BV theories. The starting point of both quantization schemes is the 
symmetric algebra $\Sym(\FFF_\cc(M)[1]) \in \dgCAlg_\bbC$, 
where $\FFF_\cc(M)[1] \in \Ch_\bbR$ denotes the $1$-shift 
of the cochain complex of compactly supported smooth sections 
of the complex of linear differential operators $(F_M,Q_M)$. 
Subsection~\ref{subsec:BV-quant} quantizes 
$(F_M,Q_M,(-,-)_M,W_M)_{M \in \Loc_m}$ as a time-orderable prefactorization 
algebra $\F \in \tPFA$ by deforming the original differential $\Q$ of 
$\Sym(\FFF_\cc(M)[1])$ 
to the quantized differential $\Q_\hbar := \Q + \ii \hbar\, \Delta_{\BV}$ 
by means of the BV Laplacian $\Delta_\BV$ defined from 
the $(-1)$-shifted Poisson structure $\tau_{(-1)}$; 
Subsection~\ref{subsec:star-prod} quantizes 
$(F_M,Q_M,(-,-)_M,W_M)_{M \in \Loc_m}$ as an AQFT $\A \in \AQFT$ 
by deforming the original commutative multiplication $\mu$ of 
the symmetric algebra $\Sym(\FFF_\cc(M)[1]) \in \dgCAlg_\bbC$ 
to the (in general non-commutative) Moyal-Weyl star product $\mu_\hbar$ 
by means of the bi-derivation $\pair{-,-}_{(0)}$ extending 
the unshifted Poisson structure $\tau_{(0)}$; 
Subsection~\ref{subsec:comp} concludes the paper with
constructing in Theorem~\ref{th:iso} an isomorphism 
$T := \exp(\ii \hbar\, \Delta_D): \F \to \F_\A$ in $\tPFA$, 
where $\Delta_D$ denotes the Dirac Laplacian 
defined from the Dirac pairing $\tau_{D}$. $T$ intertwines 
the quantized differential $\Q_\hbar$ and original 
(i.e.\ constructed out of $\mu$) time-ordered products of $\F \in \tPFA$ 
with the original differential $\Q$ and quantized 
(i.e.\ constructed out of $\mu_\hbar$) time-ordered products 
of the time-orderable prefactorization algebra $\F_\A \in \tPFA$ 
associated with $\A \in \AQFT$ according to \cite{FA-vs-AQFT}. 
Appendix \ref{app:natural} discusses some technical details about 
naturality of vector bundles, fiber metrics and differential operators
which we require to introduce the concept of natural free BV theories.

%%%%%%%%%%%%%%%%%%%%%%%%%%%%%%%%%%%%%%%%%%%%%%%%
%%%%%%%%%%%%%%%%%%%%%%%%%%%%%%%%%%%%%%%%%%%%%%%%

\section{\label{sec:prelim}Preliminaries}

\subsection{\label{subsec:Ch}Cochain complexes}
We review some basic aspects of the theory of cochain complexes 
to fix our notation and conventions.
More details on the well-known topics recalled here are covered 
by the classical literature, see e.g.\ \cite{Weibel, Hovey}. 
Let us fix a field $\bbK$ of characteristic zero. 
In the main part of this paper $\bbK$ will be either the field
$\bbR$ of real numbers or the field $\bbC$ of complex numbers. 
\begin{defi}
A {\it cochain complex}
$V = (V, Q_V)$ consists of a $\bbZ$-graded $\bbK$-vector space $V=(V^n)_{n\in\bbZ}$ 
and a differential $Q_V$, that is a collection $Q_V = (Q_V^n)_{n \in \bbZ}$ 
of degree increasing $\bbK$-linear maps 
$Q_V^n : V^n \to V^{n+1}$ such that $Q_V^{n+1} Q_V^n = 0$, for all $n \in \bbZ$.
A {\it cochain map} $f : V \to W$ is a family $f = (f^n)_{n \in \bbZ}$ of $\bbK$-linear maps 
$f^n : V^n \to W^n$ that is compatible with the differentials, 
i.e.\ $Q_W^n \, f^n = f^{n+1} \,Q_V^n$, for all $n \in \bbZ$. 
We denote by $\Ch_\bbK$ the category whose objects are cochain complexes 
and whose morphisms are cochain maps.
\end{defi}

The tensor product $V \otimes W \in \Ch_\bbK$ of two cochain complexes $V, W \in \Ch_\bbK$ consists of
\begin{subequations}\label{eqn:tensorproduct}
\begin{flalign}
(V \otimes W)^n := \bigoplus_{p \in \bbZ}{(V^p \otimes W^{n-p})} \quad,
\end{flalign}
for all $n \in \bbZ$, and of the differential $Q_\otimes$ given by the graded Leibniz rule
\begin{flalign}
Q_\otimes (v \otimes w) := Q_V v \otimes w + (-1)^{\lvert v \rvert}\, v \otimes Q_W w \quad,
\end{flalign}
\end{subequations}
for all homogeneous $v \in V$ and $w \in W$, 
where $\lvert - \rvert$ denotes the degree.
The monoidal unit of the tensor product is given by $\bbK \in \Ch_\bbK$, regarded as 
a cochain complex concentrated in degree zero with trivial differential.
The symmetric braiding is given by the cochain maps $\gamma : V \otimes W \to W \otimes V$ in $\Ch_\bbK$ 
that are defined by the Koszul sign rule 
\begin{flalign}
\gamma (v \otimes w) := (-1)^{\lvert v \rvert \, \lvert w \rvert}\, w \otimes v\quad,
\end{flalign}
for all homogeneous $v \in V$ and $w \in W$.
The internal hom $[V, W] \in \Ch_\bbK$ is the cochain complex
that consists of 
\begin{subequations}\label{eqn:internalhom}
\begin{flalign}
[V, W]^n := \prod_{p \in \bbZ}{\Hom_\bbK(V^p, W^{n + p})} \quad,
\end{flalign}
for all $n \in \bbZ$, 
where $\Hom_\bbK$ denotes the vector space of linear maps, 
and of the differential	$\partial$ defined by
\begin{flalign}
\partial f := Q_W \circ f - (-1)^{\lvert f \rvert} \,f \circ Q_V \quad,
\end{flalign}
\end{subequations}
for all homogeneous $f \in [V, W]$. 
\sk

To each cochain complex $V \in \Ch_\bbK$ one can assign
its cohomology $H^\bullet (V) = (H^n (V))_{n \in \bbZ}$, that is the graded 
vector space defined degree-wise by
$H^n (V) := \Ker (Q_V^n) / \Imm (Q_V^{n - 1})$, for all $n \in \bbZ$. 
The compatibility of cochain maps with differentials entails that 
cohomology extends to a functor
$H^\bullet$ from $\Ch_\bbK$ to the category of graded vector spaces.
A cochain map $f : V \to W$ in $\Ch_\bbK$ is called 
a {\it quasi-isomorphism} if it induces an isomorphism 
$H^\bullet (f) : H^\bullet (V) \to H^\bullet (W)$ in cohomology.
In many circumstances quasi-isomorphic cochain complexes should be regarded as ``being the same'',
which can be made precise by using techniques from model category theory.
It is proven
in \cite{Hovey} that $\Ch_\bbK$ carries the structure of a
closed symmetric monoidal model category, whose weak equivalences are the quasi-isomorphisms and 
whose fibrations are the degree-wise surjective cochain maps. 
\begin{rem}
Let us briefly recall how one may interpret the cohomology of the internal hom 
$[V, W]\in\Ch_\bbK$ between cochain complexes $V, W \in \Ch_\bbK$
in terms of higher cochain homotopies. 
Given two $n$-cocycles $f, g \in \Ker(\partial^n)$ in $[V,W]$, one defines a
{\it cochain homotopy} from $f$ to $g$ as an $(n - 1)$-cochain $\lambda \in [V, W]^{n - 1}$ such that
$\partial \lambda = g - f$. 
Since $\partial \lambda \in \Imm(\partial^{n-1})$ is an $n$-coboundary in $[V, W]$, 
the cohomology classes $[f] = [g] \in H^n ([V, W])$ coincide
if and only if a cochain homotopy from $f$ to $g$ exists.
In particular, for $n = 0$ one recovers the ordinary concept of cochain
homotopies between two cochain maps $f, g : V \to W$ in $\Ch_\bbK$.
\end{rem}

Let us also fix our convention for shifts of cochain complexes. 
For a cochain complex $V \in \Ch_\bbK$ 
and an integer $q \in \bbZ$, we define the $q$-shift $V[q] \in \Ch_\bbK$ of $V$ 
as the cochain complex consisting of $V[q]^n := V^{q + n}$, for all $n \in \bbZ$, and of the differential
$Q_{V[q]} := (-1)^q \,Q_{V}$.
Note that $V[p][q] = V[p+q]$, for all $p, q \in \bbZ$, and that $V[0] = V$. 
Furthermore, recalling the definition of the tensor product~\eqref{eqn:tensorproduct}, one obtains natural
cochain isomorphisms $V[q] \cong \bbK[q] \otimes V$ for all $q \in \bbZ$.

\subsection{\label{subsec:pairing}Extension of pairings to symmetric algebras}
In this paper we will encounter various types of pairings $\tau \in [V\otimes V,\bbK]^p$
of degree $p\in\bbZ$ on a cochain complex $V\in\Ch_\bbK$.
These pairings are either symmetric or anti-symmetric,
i.e.\
\begin{flalign}
\tau\circ \gamma = s\,\tau
\end{flalign}
with $s=+1$ in the symmetric case and $s=-1$ in the anti-symmetric case,
where $\gamma$ denotes the symmetric braiding of $\Ch_\bbK$. 
In particular, we shall consider shifted and also unshifted 
(i.e.\ $0$-shifted) (linear) Poisson structures as defined below. 

\begin{defi}\label{defi:linPois}
A {\it $p$-shifted (linear) Poisson structure} on a cochain complex $V\in\Ch_\bbK$
consists of a symmetric (respectively, anti-symmetric) pairing $\tau \in [V \otimes V, \bbK]^p$ 
of odd (respectively, even) degree $p \in \bbZ$ that is closed $\partial \tau =0$
with respect to the internal hom differential \eqref{eqn:internalhom}.
\end{defi}

The aim of this subsection is to describe an extension
of such pairings to suitable bi-derivations and, in the symmetric case, 
to suitable Laplacians 
on the symmetric algebra $\Sym\,V\in\dgCAlg_\bbK$.
The latter is the commutative differential graded algebra
defined by $\Sym\,V =  \bigoplus_{n=0}^\infty \Sym^n\, V$,
with unit element $\oone :=1\in \Sym^0\,V=\bbK $ and multiplication
\begin{flalign}
\mu(v_1\cdots v_n\otimes v^\prime_1\cdots v^\prime_{n^\prime}):=
v_1 \cdots v_n\, v^\prime_1 \cdots v^\prime_{n^\prime}\quad,
\end{flalign}
for all $n, n^\prime \geq 0$ and all 
$v_1,\dots, v_n,v^\prime_1,\dots, v^\prime_{n^\prime} \in V$. 
(By convention, the length $n=0$ corresponds to the unit $\oone$.)
\begin{defi}\label{defi:endo}
Given an (anti-)symmetric pairing $\tau\in [V\otimes V,\bbK]^p$ 
of degree $p$, we define 
\begin{flalign}\label{eqn:endo}
\pair{-,-}_\tau \in \big[\Sym\, V \otimes \Sym\, V, \Sym\, V \otimes \Sym\, V\big]^p
\end{flalign}
as the unique graded linear map of degree $p$ that fulfills the following conditions: 
\begin{enumerate}[label=(\roman*)]
\item $\pair{-,-}_\tau$ is (anti-)symmetric, i.e.\
\begin{flalign}\label{eqn:bracket-sym}
\gamma \circ \pair{-,-}_\tau\circ \gamma = s \, \pair{-,-}_\tau 
\end{flalign}
with $s=+1$ in the symmetric case and $s=-1$ in the anti-symmetric case;

\item for all $v_1,v_2 \in V$, 
$\pair{v_1,v_2}_\tau = \tau(v_1 \otimes v_2)\, \oone \otimes \oone \in \Sym\, V \otimes \Sym\, V$; 

\item for all homogeneous $a \in \Sym\, V$, 
$\pair{a,-}_\tau : \Sym\, V \to \Sym\, V \otimes \Sym\, V$ 
is a graded derivation of degree $\lvert a \rvert + p$ 
with respect to the $(\Sym\, V)$-module structure on $\Sym\, V \otimes \Sym\, V$ 
given by multiplication on the second tensor factor, i.e.\
\begin{flalign}\label{eqn:bracket-der}
\pair{a, bc}_\tau = \pair{a,b}_\tau \, (\oone \otimes c) + (-1)^{(\lvert a \rvert + p) \lvert b \rvert} \,(\oone \otimes b)\,  \pair{a, c}_\tau\quad,
\end{flalign}
for all homogeneous $b,c \in \Sym\, V$.
\end{enumerate}	
\end{defi}

An immediate consequence of the previous definition is that 
\begin{flalign}\label{eqn:partial-endo}
\partial \pair{-,-}_\tau = \pair{-,-}_{\partial \tau} \quad. 
\end{flalign}
Furthermore, given two cochain complexes $V, W \in \Ch_\bbK$ endowed with
(anti-)symmetric pairings $\tau\in [V\otimes V,\bbK]^p$ and $\omega\in [W\otimes W,\bbK]^p$
of degree $p$ and a cochain map $f: V \to W$ in $\Ch_\bbK$ preserving them, i.e.\ $\tau = \omega \circ (f \otimes f)$, 
one has 
\begin{flalign}\label{eqn:f-endo}
(\Sym\, f \otimes \Sym\, f) \circ \pair{-,-}_\tau = \pair{-,-}_\omega \circ (\Sym\, f \otimes \Sym\, f) \quad. 
\end{flalign}

\begin{rem}\label{rem:Poiss-bracket}
A $p$-shifted (linear) Poisson structure $\tau$ on $V$ can be extended 
to a $p$-shifted Poisson bracket $\{-,-\}_\tau$ on $\Sym\, V$. Indeed, from 
Definitions~\ref{defi:linPois} and~\ref{defi:endo} it follows that 
\begin{flalign}
\{-,-\}_\tau := \mu \circ \pair{-,-}_\tau \,\in\, \big[ \Sym\, V \otimes \Sym\, V,  \Sym\, V\big]^p
\end{flalign}
defines a $p$-shifted Poisson bracket, i.e.\ a graded linear map of degree $p$ that is 
closed $\partial \{-,-\}_\tau = 0$, 
symmetric (respectively, anti-symmetric) for $p$ odd (respectively, even) 
and fulfills the graded Leibniz rule and the Jacobi identity. 
\end{rem}

\begin{defi}\label{defi:Laplacian}
Given a symmetric pairing $\tau\in[V\otimes V,\bbK]^p$ 
of degree $p$, we define the {\it Laplacian}
\begin{flalign}
\Delta_\tau \in \big[\Sym\, V, \Sym\, V\big]^p
\end{flalign}
as the unique graded linear map of degree $p$ that fulfills the following conditions:
\begin{enumerate}[label=(\roman*)]
\item $\Delta_\tau (\oone) = 0$;

\item for all $v \in V$, $\Delta_\tau (v) = 0$;

\item for all $v_1, v_2 \in V$, $\Delta_\tau (v_1\, v_2) = \tau(v_1 \otimes v_2)\, \oone$;

\item for all homogeneous $a, b \in \Sym\, V$, 
\begin{flalign}\label{eqn:Laplacian}
\Delta_\tau (a\, b) = \Delta_\tau (a)\, b + (-1)^{p \lvert a \rvert} \,a\, \Delta_\tau (b) +  \mu(\pair{a, b}_\tau)\quad.
\end{flalign}
\end{enumerate}
\end{defi}

The defining properties of $\Delta_\tau$ imply the explicit formula 
\begin{flalign}\label{eqn:Laplacian-explicit}
\Delta_\tau (v_1 \cdots v_n) = \sum_{i < j}{(-1)^{p \sum_{k=1}^{i - 1}{\lvert v_k \rvert} + \lvert v_j \rvert \sum_{k= i +1}^{j - 1}{\lvert v_k \rvert}}\, \tau(v_i \otimes v_j)\, v_1 \cdots \check{v}_i \cdots \check{v}_j \cdots v_n } \quad, 
\end{flalign}
for all $n \geq 1$ and all homogeneous $v_1, \ldots, v_n \in V$, 
where $\check{\cdot}$ means to omit the corresponding factor. 
Furthermore, for $p$ even, iterating \eqref{eqn:Laplacian} and observing 
that both $\Delta_\tau \otimes \id$ and $\id \otimes \Delta_\tau$ 
graded commute with $\pair{-,-}_\tau$, one finds that, for all $n \geq 1$, 
\begin{flalign}\label{eqn:Deltan-mu}
\Delta_\tau^n\circ \mu = \mu\circ \big(\Delta_{\tau\, \otimes} + \pair{-,-}_{\tau}\big)^n = 
\sum_{k=0}^n \binom{n}{k}\, \mu\circ \pair{-,-}_\tau^{n-k}\circ \Delta_{\tau\, \otimes}^k \quad,
\end{flalign}
where $\Delta_{\tau\, \otimes} := \Delta_\tau \otimes \id + \id \otimes \Delta_\tau$. 
(For $n=1$ this recovers \eqref{eqn:Laplacian}. 
For $p$ odd, the left-hand side vanishes identically for $n \geq 2$,
see \eqref{eqn:LaplacianLaplacian} below.) 
Taking also \eqref{eqn:partial-endo} into account, one shows that 
\begin{flalign}\label{eqn:partial-Laplacian}
\partial \Delta_\tau = \Delta_{\partial \tau} \quad.
\end{flalign}
Given two symmetric pairings $\tau\in[V\otimes V,\bbK]^p$ 
and $\tau^\prime\in[V\otimes V,\bbK]^{p^\prime}$ 
of degrees $p$ and $p^\prime$ respectively, 
the explicit formula~\eqref{eqn:Laplacian-explicit} for the Laplacian 
entails
\begin{flalign}\label{eqn:LaplacianLaplacian}
\Delta_{\tau}\circ \Delta_{\tau^\prime} = (-1)^{p p^\prime} \,\Delta_{\tau^\prime}\circ \Delta_{\tau}\quad.
\end{flalign}
Furthermore, given two cochain complexes $V, W \in \Ch_\bbK$ endowed with 
symmetric pairings $\tau\in[ V \otimes V,\bbK]^p$ 
and $\omega\in [ W \otimes W,\bbK]^p$ of degree $p$  
and a cochain map $f: V \to W$ in $\Ch_\bbK$ preserving them, i.e.\ $\tau = \omega \circ (f \otimes f)$, 
it follows from \eqref{eqn:Laplacian-explicit} that 
\begin{flalign}\label{eqn:f-Laplacian}
\Sym\, f \circ \Delta_\tau = \Delta_\omega \circ \Sym\, f \quad. 
\end{flalign}

\subsection{\label{subsec:Green}Lorentzian geometry and Green's operators}
In this subsection we recall some relevant concepts from Lorentzian 
geometry and Green hyperbolic differential operators. 
We refer to \cite{Bar, BGP, ONeill} for an in-depth introduction to these topics.
\sk

A {\it Lorentzian manifold} $(M,g)$ is a smooth manifold $M$ 
endowed with a metric $g$ of signature $(-,+,\dots,+)$.
Given a non-zero tangent vector $0 \neq v \in T_x M$ at a point 
$x \in M$, we say that $v$ is 
{\it spacelike} if $g(v, v) > 0$, {\it lightlike} if $g(v, v) = 0$ 
and {\it timelike} if $g(v, v) < 0$. 
$v$ is also called {\it causal} if $g(v, v) \leq 0$, 
that is $v$ is either timelike or lightlike.
Let $ I \subseteq \bbR$ be an open interval. A curve $c : I \to M$ is 
called {\it spacelike} ({\it lightlike}, {\it timelike} or {\it causal}) 
if its tangent vectors $\dot{c}(t)$ are spacelike (lightlike, timelike 
or causal, respectively), for all $t \in I$.
A Lorentzian manifold $M$ is called {\it time-orientable} if there exists 
an everywhere timelike vector field $\ttt \in \Gamma (TM)$. 
Such $\ttt$ determines a {\it time-orientation} on $M$. 
We will denote {\it oriented and time-oriented Lorentzian manifolds} 
by $M = (M,g,\mathfrak{o},\ttt)$, where $\mathfrak{o}$ is the chosen orientation. 
A timelike or causal curve $c : I \to M$ is said to be 
{\it future directed} if $g(\ttt, \dot{c}) < 0$ 
and {\it past directed} if $g(\ttt, \dot{c}) > 0$.
The {\it chronological future/past} $I_M^\pm (S) \subseteq M$ 
of a subset $S \subseteq M$ consists of all points
that can be reached by a future/past directed timelike curve stemming from $S$.
Similarly, the {\it causal future/past} $J_M^\pm (S) \subseteq M$ 
consists of $S$ itself and of all points that can be reached 
by a future/past directed causal curve stemming from $S$. 
By definition, $I_M^\pm(S) \subseteq J_M^\pm(S)$; moreover, recall
from e.g.\ \cite[Chapter 14]{ONeill} that 
\begin{flalign}
I_M^\pm(J_M^\pm(S)) = I_M^\pm(S) = J_M^\pm(I_M^\pm(S)) \subseteq M 
\end{flalign}
is always an open subset. 
A subset $S \subseteq M$ is called {\it causally convex}
if $J_M^+ (S) \cap J_M^- (S) \subseteq S$, i.e.\ when 
all causal curves with endpoints in $S$ lie in $S$. 
An example of a causally convex subset is the 
{\it causally convex hull} 
\begin{flalign}
J_M^{+\cap-}(S) := J_M^+(S) \cap J_M^-(S) \subseteq M
\end{flalign}
of a subset 
$S \subseteq M$, i.e.\ the smallest causally convex subset of $M$ 
that contains $S$. 
\begin{defi}\label{defi:globhyp}
An oriented and time-oriented Lorentzian manifold $M$ is called 
{\it globally hyperbolic} if it admits a {\it Cauchy surface} 
$\Sigma \subset M$, i.e.\ a subset that is met exactly once 
by any inextendible future directed timelike curve in $M$. 
$\Loc_m$ denotes the category whose objects are all 
$m$-dimensional oriented and time-oriented globally hyperbolic Lorentzian manifolds $M$ and whose
morphisms are all orientation and time-orientation preserving 
isometric embeddings $f : M \to M^\prime$
with open and causally convex image $f (M) \subseteq M^\prime$.
\end{defi}

For $M \in \Loc_m$ and $O \subseteq M$ open, one has that the 
causal future/past 
\begin{flalign}
J_M^\pm(O) = I_M^\pm(O) 
\end{flalign}
coincides with the chronological one. 
(Indeed, any $p \in J_M^\pm(O)$ lies along a future/past 
directed causal curve emanating from some $q \in O$. Since $O$ is open, 
$q$ can be reached via a future/past directed timelike curve 
emanating from some $r \in O$. But then 
$p \in J_M^{\pm}(q) \subseteq J_M^{\pm}(I_M^\pm(r)) = I_M^\pm(r) \subseteq I_M^\pm(O)$.) 
In particular, when $O \subseteq M$ is open, 
the causal future/past $J_M^\pm(O) \subseteq M$ and the causally 
convex hull $J_M^{+\cap-}(O) \subseteq M$ are open subsets. 
\sk

Consider an oriented and time-oriented globally hyperbolic Lorentzian manifold $M \in \Loc_m$ of 
dimension $m \geq 2$. Let $E \to M$ be a real or complex vector bundle 
of finite rank. Denote the vector space of smooth sections of $E$ 
by $\Gamma(E)$ and the vector subspace of compactly supported sections 
by $\Gamma_\cc(E) \subseteq \Gamma(E)$.
\begin{defi}\label{defi:Gpm}
A {\it Green hyperbolic operator} is a linear differential operator 
$P : \Gamma (E) \to \Gamma (E)$ that admits
{\it retarded and advanced Green's operators} $G_\pm$, 
which are linear maps $G_\pm : \Gamma_\cc (E) \to \Gamma (E)$ such that, 
for all $\varphi \in \Gamma_\cc (E)$, the following conditions hold:
\begin{enumerate}[label={(\roman*)}]
\item $P G_\pm \varphi = \varphi$;

\item $G_\pm P \varphi = \varphi$;

\item $\supp (G_\pm \varphi) \subseteq J_M^\pm (\supp (\varphi))$.
\end{enumerate}
The difference $G := G_+ - G_- : \Gamma_\cc (E) \to \Gamma (E)$
between the retarded and advanced Green's operators is called the 
{\it retarded-minus-advanced propagator} and 
their average $G_D := \tfrac{1}{2}(G_+ + G_-)  : \Gamma_\cc (E) \to \Gamma (E)$
is called the {\it Dirac propagator}.
\end{defi}

In \cite{Bar} it is shown that the retarded and advanced Green's operators 
associated with a Green hyperbolic operator are unique.
\sk

Given a real vector bundle $E \to M$ endowed with a {\it fiber metric} 
$\langle -, - \rangle$, i.e.\ a fiber-wise non-degenerate, symmetric, 
bilinear form, and denoting the volume form on $M$ by $\vol_M$, 
one defines the integration pairing
\begin{flalign}\label{eqn:fibintpairing}
\langle\!\langle \varphi, \varphi^\prime \rangle\!\rangle := \int_M{\langle \varphi, \varphi^\prime \rangle\, \vol_M} \quad,
\end{flalign}
for all sections $\varphi, \varphi^\prime \in \Gamma (E)$ with compact 
overlapping support, i.e.\ such that 
$\supp(\varphi) \cap \supp(\varphi^\prime)\subseteq M$ is compact. 
Given two vector bundles $(E_1, \langle -, - \rangle_1), (E_2, \langle -, - \rangle_2)$  
endowed with fiber metrics and 
a linear differential operator $Q : \Gamma (E_1) \to \Gamma (E_2)$,
one defines its {\it formal adjoint} 
$Q^\ast : \Gamma (E_2) \to \Gamma (E_1)$ as the unique linear
differential operator such that 
\begin{flalign}
\langle\!\langle Q^\ast \varphi_2, \varphi_1  \rangle\!\rangle_1 := \langle\!\langle \varphi_2, Q \varphi_1 \rangle\!\rangle_2 \quad,
\end{flalign}
for all sections $\varphi_1 \in \Gamma (E_1)$, $\varphi_2 \in \Gamma (E_2)$ 
with compact overlapping support. A linear differential operator 
$P : \Gamma (E) \to \Gamma (E)$ on $(E, \langle -, - \rangle)$ 
is {\it formally self-adjoint} if $P^\ast = P$. 
When $P : \Gamma (E) \to \Gamma (E)$ is a formally self-adjoint Green hyperbolic operator,
the associated retarded and advanced Green's operators $G_\pm$ 
are ``formal adjoints'' of each other, i.e.\
\begin{flalign}\label{eqn:Gpmadjoint}
\langle\!\langle G_\pm \varphi, \varphi^\prime \rangle\!\rangle = \langle\!\langle \varphi, G_\mp \varphi^\prime \rangle\!\rangle \quad,
\end{flalign}
for all compactly supported sections 
$\varphi, \varphi^\prime \in \Gamma_\cc (E)$. 
This entails that the retarded-minus-advanced propagator 
$G$ is ``formally skew-adjoint'', i.e.\
\begin{flalign}\label{eqn:Gskewadjoint}
\langle\!\langle G \varphi, \varphi^\prime \rangle\!\rangle = - \langle\!\langle \varphi, G \varphi^\prime \rangle\!\rangle \quad ,
\end{flalign}
for all compactly supported sections 
$\varphi, \varphi^\prime \in \Gamma_\cc (E)$.

\subsection{\label{subsec:AQFTandtPFA}Algebraic QFTs and time-orderable prefactorization algebras}
Algebraic quantum field theories (AQFTs) \cite{BFV, FV, operads_aqft}
and factorization algebras \cite{CG, CG2, FA-vs-AQFT} provide two 
axiomatic frameworks to describe the algebraic structures 
on the observables of a quantum field theory in various geometric settings.
In this subsection we review some basic concepts from these two frameworks 
in the Lorentzian setting. 
\sk

We say that two $\Loc_m$-morphisms $f_1 : M_1 \to N \gets M_2 : f_2$ 
to a common target are {\it causally disjoint} if 
there exists no causal curve in $N$ connecting their images, i.e.\ 
$J_N (f_1(M_1)) \cap f_2(M_2) = \emptyset$, 
where $J_M(S) := J^+_M(S) \cup J^-_M(S)$ denotes the union 
of the causal future and past of a subset $S \subseteq M$. 
Furthermore, a morphism $f : M \to N$ in $\Loc_m$ is {\it Cauchy}
if its image $f(M) \subseteq N$ contains a Cauchy surface of $N$.
\begin{defi}\label{defi:AQFT}
A $\Ch_\bbC$-valued {\it algebraic quantum field theory (AQFT)} $\A$ on $\Loc_m$ is a functor 
$\A : \Loc_m \to \dgAlg_\bbC$ taking values in the category $\dgAlg_\bbC$ 
of differential graded algebras that satisfies the following axioms:
\begin{enumerate}[label=(\roman*)]
\item {\it Einstein causality:} For all causally disjoint morphisms 
$f_1 : M_1 \to N \gets M_2 : f_2$ in $\Loc_m$, 
the diagram
\begin{flalign}\label{eqn:Einstein_causality}
\xymatrix{
\A (M_1) \otimes \A (M_2) \ar[rr]^-{\A (f_1) \otimes \A (f_2)} \ar[d]_-{\A (f_1) \otimes \A (f_2)} && \A (N) \otimes \A (N) \ar[d]^-{\mu_N^\op} \\
\A (N) \otimes \A (N) \ar[rr]_-{\mu_N} && \A (N)
}
\end{flalign}
in $\Ch_\bbC$ commutes, where $\mu_N$ and $\mu_N^\op := \mu_N \circ \gamma$ 
are the multiplication and the opposite multiplication of $\A (N) \in \dgAlg_\bbC$;

\item {\it Time-slice:} For all Cauchy morphisms $f : M \to N$ in $\Loc_m$, 
the morphism $\A(f) : \A(M) \to \A(N)$ in $\dgAlg_\bbC$ is a quasi-isomorphism.
\end{enumerate}
A morphism $\kappa : \A \to \B$ between AQFTs
is a natural transformation. 
This defines the category $\AQFT$ of $\Ch_\bbC$-valued AQFTs
as the full subcategory $\AQFT \subseteq \dgAlg_\bbC^{\Loc_m}$ of the functor category 
consisting of all functors that satisfy the Einstein causality and time-slice axioms.
\end{defi}
\begin{rem}\label{rem:hAQFT}
There exists a more elegant and powerful operadic
description \cite{operads_aqft} of the category $\AQFT$.
This more abstract perspective is particularly useful to endow $\AQFT$ 
with a model category structure \cite{hAQFT}, which 
provides a solid foundation for the study of $\Ch_\bbC$-valued
AQFTs. To prove the 
results of our present paper, we do not have to make
explicit use of these techniques.
\end{rem}

Our goal is to construct and compare AQFTs
and prefactorization algebras in the Lorentzian setting. 
For this purpose, we recall below 
a Lorentzian version of the prefactorization algebras from \cite{CG},
called {\it time-orderable} prefactorization algebras \cite{FA-vs-AQFT}. 
This requires some preliminaries.
A tuple of $\Loc_m$-morphisms $(f_1 : M_1 \to N, \dots, f_n : M_n \to N)$, 
also denoted $\und{f}: \und{M} \to N$, to a common target is called 
{\it time-ordered} if $J_N^+ (f_i(M_i)) \cap f_j(M_j) = \emptyset$, for all $i < j$. 
Given a tuple $\und{f}: \und{M} \to N$ in $\Loc_m$ of length $n$, 
a {\it time-ordering permutation} $\rho \in \Sigma_n$ 
is a permutation such that the $\rho$-permuted tuple 
$\und{f} \rho := (f_{\rho (1)}, \dots, f_{\rho (n)}) : \und{M} \rho \to N$ 
of $\Loc_m$-morphisms is time-ordered. 
When a time-ordering permutation exists, one says that 
$\und{f}: \und{M} \to N$ in $\Loc_m$ is {\it time-orderable}. 
(Note that the time-ordering permutation for a tuple may not be unique. For instance, 
two morphisms $f_1: M_1 \to N \leftarrow M_2: f_2$ in $\Loc_m$ 
are causally disjoint precisely when 
both $(f_1,f_2)$ and $(f_2,f_1)$ are time-ordered pairs.)
A time-orderable 1-tuple $(f) : \und{M} \to N$ in $\Loc_m$ 
is denoted simply as a morphism $f : M \to N$ in $\Loc_m$ and, for each $N \in \Loc_m$, 
we define a unique time-orderable empty tuple $\emptyset \to N$. 
Time-orderable tuples are composable and carry permutation 
group actions, see \cite{FA-vs-AQFT}. These facts are crucial 
for the next definition. 
\begin{defi}\label{defi:tPFA}
A $\Ch_\bbC$-valued {\it time-orderable prefactorization algebra} $\F$ 
on $\Loc_m$ consists of the data listed below: 
\begin{enumerate}[label=(\alph*)]
\item For each $M \in \Loc_m$, a cochain complex $\F (M) \in \Ch_\bbC$.

\item For each time-orderable tuple $\und{f} : \und{M} \to N$ in $\Loc_m$, 
a morphism $\F (\und{f}) : \F(\und{M}) \to \F (N)$ 
in $\Ch_\bbC$, called {\it time-ordered product}, 
where $\F(\und{M}) := \bigotimes_{i = 1}^n{\F (M_i)} \in \Ch_\bbC$ 
denotes the tensor product.
By convention, the time-ordered product assigned to an empty tuple $\emptyset \to N$ is a morphism 
$\bbC \to \F (N)$ in $\Ch_\bbC$ from the monoidal unit.
\end{enumerate}
These data are subject to the following axioms:
\begin{enumerate}[label=(\roman*)]
\item  For all time-orderable tuples $\und{f} = (f_1, \dots, f_n) : \und{M} \to N$ and 
$\und{g}_i = (g_{i1}, \dots, g_{ik_i}): \und{L}_i \to M_i$, $i = 1, \dots, n$, in $\Loc_m$, the diagram
\begin{flalign}\label{eqn:composition-factor}
\xymatrix@C=4em{
\bigotimes\limits_{i = 1}^n{\F(\und{L}_i)} \ar[r]^-{\bigotimes_i{\F (\und{g}_i)}} \ar[dr]_(0.45)*+<1em>{\labelstyle \F (\und{f}(\und{g}_1, \dots, \und{g}_n))} 
& \F (\und{M}) \ar[d]^-{\F (\und{f})} \\
&  \F (N)
}
\end{flalign}
in $\Ch_\bbC$ commutes, where 
$\und{f}(\und{g}_1, \dots, \und{g}_n) := (f_1 g_{11}, \dots, f_n g_{nk_n}) : (\und{L}_1, \dots, \und{L}_n) \to N$
is the time-orderable tuple given by composition in $\Loc_m$.

\item For all $M \in \Loc_m$, $\F (\id_M) = \id_{\F (M)} : \F (M) \to \F (M)$ in $\Ch_\bbC$ is the identity.

\item For all time-orderable tuples $\und{f} : \und{M} \to N$ in $\Loc_m$
and permutations $\sigma \in \Sigma_n$, the diagram
\begin{flalign}\label{eqn:permutation-factor}
\xymatrix{
\F (\und{M}) \ar[d]_{\gamma_\sigma} \ar[rr]^-{\F (\und{f})} && \F (N) \\
\F (\und{M}\sigma) \ar[urr]_-*+<0.4em>{\labelstyle \F (\und{f}\sigma)}
}
\end{flalign}
in $\Ch_\bbC$ commutes, where $\gamma_\sigma$ is defined by 
the symmetric braiding $\gamma$ of $\Ch_\bbC$.
\end{enumerate} 
We say that a time-orderable prefactorization algebra $\F$ satisfies the {\it time-slice axiom} if, 
for all Cauchy morphism $f : M \to N$ in $\Loc_m$, 
$\F(f) : \F(M) \to \F(N)$ in $\Ch_\bbC$ is a quasi-isomorphism. 
\sk

A morphism $\zeta = (\zeta_M)_{M \in \Loc_m} : \F \to \G$ 
of time-orderable prefactorization algebras is a collection of cochain maps 
$\zeta_M : \F (M) \to \G (M)$ in $\Ch_\bbC$, indexed by objects $M \in \Loc_m$,
that is compatible with the time-ordered products in the sense that, 
for all time-orderable tuples $\und{f} : \und{M} \to N$ in $\Loc_m$, the diagram
\begin{flalign}
\xymatrix{
\F (\und{M}) \ar[rr]^-{\F (\und{f})} \ar[d]_-{{\zeta_{\und{M}}}}&& \F (N) \ar[d]^-{\zeta_N} \\
\G (\und{M}) \ar[rr]_-{\G (\und{f})} && \G (N)
}
\end{flalign}
in $\Ch_\bbC$ commutes, where $\zeta_{\und{M}} := \bigotimes_i \zeta_{M_i}$. 
We denote the category of time-orderable prefactorization algebras 
on $\Loc_m$ satisfying the time-slice axiom by $\tPFA$. 
\end{defi}

%%%%%%%%%%%%%%%%%%%%%%%%%%%%%%%%%%%%%%%%%%%%%%%%
%%%%%%%%%%%%%%%%%%%%%%%%%%%%%%%%%%%%%%%%%%%%%%%%

\section{\label{sec:Witness}Green's witnesses}
In this section we briefly recall the concept of a {\it Green's witness} for a complex of linear differential operators,
see \cite{GreenHyp} for more details. This consists of a collection of degree decreasing linear differential operators
that enable the explicit construction of 
{\it retarded and advanced Green's homotopies}. 
The latter are differential graded analogs of the usual retarded and advanced Green's operators, 
see e.g.\ \cite{BGP,Bar} and also Subsection \ref{subsec:Green}, 
and they will play a key role in our
construction of AQFTs and
their comparison to time-orderable prefactorization algebras in Section~\ref{sec:Quant}.
Given a Green's witness, we shall 
endow the underlying complex of linear differential operators with the following three structures: 
1.)~a $(-1)$-shifted Poisson structure $\tau_{(-1)}$, 2.)~an unshifted
Poisson structure $\tau_{(0)}$ and 3.)~a symmetric pairing 
$\tau_D$,  that we call Dirac pairing, trivializing the $(-1)$-shifted Poisson structure, i.e.\ $\tau_{(-1)} = \partial \tau_D$.
We shall show that $\tau_{(-1)}$, $\tau_{(0)}$ and $\tau_D$ are natural 
when all input data are natural (with respect to 
the category $\Loc_m$ of $m$-dimensional oriented and time-oriented 
globally hyperbolic Lorentzian manifolds). 
In particular, we shall construct a functor 
$(\FFF_\cc[1],\tau_{(0)}) : \Loc_m \to \PCh_\bbR$ that assigns to each 
$M \in \Loc_m$ a Poisson cochain complex $(\FFF_\cc(M)[1],\tau^M_{(0)})$
whose cochains may be interpreted field-theoretically as linear observables.
(Here $\PCh_\bbR$ denotes the category whose objects are Poisson cochain 
complexes $(V,\tau)$, consisting of a cochain complex $V \in \Ch_\bbR$ endowed 
with an unshifted linear Poisson structure $\tau$, see Definition~\ref{defi:linPois}, 
and whose morphisms $f: (V,\tau) \to (W,\omega)$ are cochain maps 
$f: V \to W$ in $\Ch_\bbR$ preserving the Poisson structures, 
i.e.\ $\omega \circ (f \otimes f) = \tau$.) 
In Theorem~\ref{th:time-slice-cl} we shall prove that the functor $(\FFF_\cc[1],\tau_{(0)})$ satisfies the classical 
analogs of the Einstein causality and time-slice axioms.

\subsection{\label{subsec:taus}\texorpdfstring{$\tau_{(-1)}$, $\tau_{(0)}$ and $\tau_D$}{Structures} over a fixed globally hyperbolic Lorentzian manifold}
Given a ($\bbZ$-)graded ($\bbR$-)vector bundle $F \to M$ (degree-wise of finite rank)
over an oriented and time-oriented globally hyperbolic Lorentzian manifold $M \in \Loc_m$,
we denote by
\begin{flalign}
\FFF(M)^n := \Gamma(F^n)
\end{flalign}
the vector space of degree $n$ smooth sections, 
i.e.\ the smooth sections of the degree $n$ vector bundle $F^n \to M$, and by
\begin{flalign}
\FFF_\cc(M)^n := \Gamma_\cc (F^n)
\end{flalign}
the vector space of degree $n$ smooth sections with compact support.
\begin{defi}\label{defi:CLDO-fibermetric}
A {\it complex of linear differential operators} $(F,Q)$ over $M \in \Loc_m$ consists of
a graded vector bundle $F \to M$ and of a collection 
$Q = (Q^n : \FFF(M)^n \to \FFF(M)^{n + 1})_{n \in \bbZ}$
of degree increasing linear differential operators such that $Q^{n + 1} Q^n = 0$, for all $n \in \bbZ$. 
We denote by $\FFF(M) \in \Ch_\bbR$ the cochain complex of sections associated
with the complex of linear differential operators $(F,Q)$.
\sk
	
A {\it compatible $(-1)$-shifted fiber metric} $(-,-)$ on $(F,Q)$ is
a fiber-wise non-degenerate, graded anti-symmetric, graded vector bundle map
$(-,-) : F \otimes F \to M \times \bbR[-1]$ such that the identity
\begin{flalign}\label{eqn:compatibility-pairing}
\int_M{(Q\varphi_1, \varphi_2)\,\vol_M} + (-1)^{\lvert \varphi_1 \rvert} \int_M{(\varphi_1, Q\varphi_2)\, \vol_M} = 0
\end{flalign}
holds for all homogeneous sections $\varphi_1, \varphi_2 \in \FFF(M)$ 
with compact overlapping support.
\end{defi}

\begin{rem}
The compatibility condition~\eqref{eqn:compatibility-pairing} implies 
that the integration pairing
\begin{subequations}\label{eqn:intpairing}
\begin{flalign}
(\!(-,-)\!) : \FFF_\cc(M) \otimes \FFF(M) \longrightarrow \bbR[-1] \quad,
\end{flalign}
defined by
\begin{flalign}
(\!(\psi, \varphi)\!) := \int_M{(\psi, \varphi)\,\vol_M} \quad,
\end{flalign}
\end{subequations}
for all $\psi \in \FFF_\cc(M)$ and $\varphi \in \FFF(M)$, is a cochain map.
\end{rem}

\begin{defi}\label{defi:witness}
A {\it (formally self-adjoint) Green's witness} $W = (W^n)_{n \in \bbZ}$ 
for a complex of linear differential operators $(F,Q)$ endowed with a 
compatible $(-1)$-shifted fiber metric $(-,-)$ consists of a collection 
of degree decreasing linear differential operators 
$W^n : \FFF(M)^n \to \FFF(M)^{n-1}$ such that the following conditions
hold:
\begin{enumerate}[label={(\roman*)}]
\item For all $n \in \bbZ$, 
$P^n := Q^{n-1}\, W^n + W^{n+1}\, Q^n : \FFF(M)^n \to \FFF(M)^n$
are Green hyperbolic operators.

\item $Q\, W\, W = W\, W\, Q$.

\item $\int_M{(W \varphi_1, \varphi_2)\,\vol_M} = (-1)^{\lvert \varphi_1 \rvert} \int_M{(\varphi_1, W\varphi_2)\,\vol_M}$, 
for all homogeneous sections $\varphi_1, \varphi_2 \in \FFF(M)$ 
with compact overlapping support.
\end{enumerate}
\end{defi}

\begin{rem}\label{rem:propW}
Some direct consequences of Definition~\ref{defi:witness} are listed below:
\begin{enumerate}[label={(\arabic*)}]
\item For all $n \in \bbZ$, there exist unique retarded and advanced Green's operators 
$G^n_\pm : \FFF_\cc(M)^n \to \FFF(M)^n$ associated with the Green hyperbolic operators $P^n$;

\item It follows that $P\, W = W\, P$ and $P\, Q = Q\, P$, 
hence also $G_\pm\, W = W\, G_\pm$ and $G_\pm\, Q = Q\, G_\pm$;

\item $P$ is formally self-adjoint, i.e.\
$\int_M{(P \varphi_1, \varphi_2)\,\vol_M} = \int_M{(\varphi_1, P \varphi_2)\,\vol_M}$, 
for all sections $\varphi_1, \varphi_2 \in \FFF(M)$ 
with compact overlapping support. It follows that 
$\int_M{(\psi_1, G_\pm \psi_2)\,\vol_M} = \int_M{(G_\mp \psi_1, \psi_2)\,\vol_M}$, 
for all sections $\psi_1, \psi_2 \in \FFF_\cc(M)$
with compact support, and hence also 
$\int_M{(\psi_1, G \psi_2)\,\vol_M} = - \int_M{(G \psi_1, \psi_2)\,\vol_M}$ and 
$\int_M{(\psi_1, G_D \psi_2)\,\vol_M} = \int_M{(G_D \psi_1, \psi_2)\,\vol_M}$,
where $G := G_+ - G_-$ and $G_D := \tfrac{1}{2}(G_+ + G_-)$ 
denote respectively the retarded-minus-advanced and Dirac propagators.
\end{enumerate}
These observations will be used frequently in our constructions in this paper.
\end{rem}

In analogy with the Riemannian setting \cite{CG2}, 
we introduce the following terminology.
\begin{defi}\label{defi:BV}
A {\it free BV theory} $(F,Q,(-,-),W)$ on $M \in \Loc_m$ consists of 
a complex of linear differential operators $(F,Q)$ with a 
compatible $(-1)$-shifted fiber metric $(-,-)$ and a Green's witness $W$. 
\end{defi}

Several examples of free BV theories, closely related to the examples 
from \cite{linYM,hrep,GreenHyp}, are presented below. 
\begin{ex}\label{ex:P}
Our first example of a free BV theory over $M \in \Loc_m$ is obtained 
from an ordinary field theory, which is defined by a formally 
self-adjoint Green hyperbolic operator $P$ 
acting on sections of a vector bundle $E$ over $M$ 
endowed with a fiber metric $\langle-,-\rangle$. 
To these data one assigns the free BV theory $(F_P,Q_P,(-,-)_P,W_P)$ 
consisting of the complex of linear differential operators 
\begin{subequations}
\begin{flalign}
(F_P,Q_P) := \big( \xymatrix{
E \ar[r]^-{P} & E
} \big) 
\end{flalign}
concentrated in degrees $0$ and $1$, of the compatible $(-1)$-shifted 
fiber metric $(-,-)_P$ uniquely determined by 
\begin{flalign}
(\varphi^\ddagger,\varphi)_P := \langle \varphi^\ddagger,\varphi \rangle \quad, 
\end{flalign}
for all $\varphi \in F_P^0 = E$ and $\varphi^\ddagger \in F_P^1 = E$ 
over the same base point, and of the 
Green's witness 
\begin{flalign}
W_P := \big( \xymatrix{E & E \ar[l]_-{\id}} \big) \quad. 
\end{flalign}
\end{subequations}
Here and in the following examples we decided to use a convenient graphical visualization
for a Green's witness as a sequence of linear differential operators, 
which is pointing from right to left because $W$ decreases the degree. It is important to emphasize that 
this sequence is in general {\em not} a chain complex because 
a Green's witness is not necessarily square-zero.
\end{ex}

\begin{ex}\label{ex:CS}
The free BV theory $(F_\CS,Q_\CS,(-,-)_\CS,W_\CS)$ 
associated with linear Chern-Simons theory on $M \in \Loc_3$ 
consists of the complex of linear differential operators 
\begin{subequations}
\begin{flalign}
(F_\CS,Q_\CS) := \big( \xymatrix{
\Lambda^0 M \ar[r]^-{\dd} & \Lambda^1 M \ar[r]^-{\dd} & \Lambda^2 M \ar[r]^-{\dd} & \Lambda^3 M
} \big)
\end{flalign}
concentrated between degrees $-1$ and $2$ (this is the $1$-shift 
of the de Rham complex up to a global sign), 
of the $(-1)$-shifted fiber metric $(-,-)_\CS$ uniquely determined by 
\begin{flalign}
(A^\ddagger,A)_\CS := \ast^{-1}(A^\ddagger \wedge A) \quad, \qquad (c^\ddagger,c)_\CS := - \ast^{-1}(c^\ddagger \wedge c) \quad, 
\end{flalign}
for all $c \in F_\CS^{-1} = \Lambda^0 M$, 
$A \in F_\CS^0 = \Lambda^1 M$, 
$A^\ddagger \in F_\CS^1 = \Lambda^2 M$ 
and $c^\ddagger \in F_\CS^2 = \Lambda^3 M$ over the same base point, 
where $\wedge$ denotes the wedge product on differential forms 
and $\ast$ denotes the Hodge operator on $M$, and of the Green's witness 
\begin{flalign}
W_\CS := \big( \xymatrix{\Lambda^0 M & \Lambda^1 M \ar[l]_-{\delta} & \Lambda^2 M \ar[l]_-{\delta} & \Lambda^3 M \ar[l]_-{\delta}} \big) \quad, 
\end{flalign}
where $\delta := (-1)^k \ast^{-1} \dd\, \ast$ 
denotes the de Rham codifferential on $M$ on $k$-forms, for $k=1,2,3$. 
(It is useful to keep in mind that $\ast^{-1} = - \ast$ 
in odd dimension and Lorentzian signature.)
\end{subequations}
\end{ex}

\begin{ex}\label{ex:MW}
The free BV theory $(F_\MW,Q_\MW,(-,-)_\MW,W_\MW)$ 
associated with Maxwell $p$-forms on $M \in \Loc_m$, 
for $p=0,\ldots,m-1$, consists of the complex of linear differential operators 
\begin{subequations}
\begin{flalign}
(F_\MW,Q_\MW) := \big( \xymatrix{\Lambda^0 M \ar[r]^-{\dd} & \cdots \ar[r]^-{\dd} & \Lambda^p M \ar[r]^-{\delta \dd} & \Lambda^p M \ar[r]^-{\delta} & \cdots \ar[r]^-{\delta} & \Lambda^0 M} \big)
\end{flalign}
concentrated between degrees $-p$ and $p+1$, 
of the $(-1)$-shifted fiber metric $(-,-)_\MW$ uniquely determined by 
\begin{flalign}
(a^\ddagger,a)_\MW := s_{k+1} \ast^{-1}(a^\ddagger \wedge \ast a) \quad,
\end{flalign}
for all $k=0,\ldots,p$, $a \in F_\MW^{-k} = \Lambda^{p-k} M$ and 
$a^\ddagger \in F_\MW^{k+1} = \Lambda^{p-k} M$ over the same base point, 
where $s_1 := 1$ and $s_k := (-1)^k s_{k-1}$, for $k=2,\ldots,p+1$,
and of the Green's witness 
\begin{flalign}
W_\MW := \big( \xymatrix{\Lambda^0 M & \cdots \ar[l]_-{\delta} & \Lambda^p M \ar[l]_-{\delta} & \Lambda^p M \ar[l]_-{\id} & \cdots \ar[l]_-{\dd} & \Lambda^0 M \ar[l]_-{\dd}} \big) \quad. 
\end{flalign}
Note that for $p=1$ Maxwell $p$-forms recover linear Yang-Mills theory. 
\end{subequations}
\end{ex}

Let $(F,Q,(-,-),W)$ be free BV theory. 
We define the {\it retarded/advanced Green's homotopy}
\begin{flalign}\label{eqn:Lambdapm}
\Lambda_\pm := W\, G_\pm = G_\pm\, W \in [\FFF_\cc(M), \FFF(M)]^{-1}\quad,
\end{flalign}
where $G_\pm$ denotes the retarded/advanced Green's operator 
associated with $P$, see Definition~\ref{defi:witness} 
and Remark~\ref{rem:propW}. (In \eqref{eqn:Lambdapm}
we used Remark~\ref{rem:propW}~(2) and that $W$ preserves supports.) 
Note that the retarded/advanced Green's homotopy $\Lambda_\pm \in [\FFF_\cc(M), \FFF(M)]^{-1}$
is a cochain homotopy that trivializes the cochain map
$j : \FFF_\cc(M) \to \FFF(M)$ in $\Ch_\bbR$ forgetting compact supports. 
More explicitly, one computes 
\begin{flalign}\label{eqn:delLambda}
\partial \Lambda_\pm =  Q\, W\, G_\pm + W\, G_\pm\, Q = P\, G_\pm = j \quad,
\end{flalign}
where in the first step we used the definition of the internal hom 
differential $\partial$, the second step follows from 
Remark~\ref{rem:propW}~(2) and in the last step we used that 
$G_\pm$ is the retarded/advanced Green's operator associated with $P$.

\begin{rem}\label{rem:Green-homotopies}
$\Lambda_\pm$ as defined in \eqref{eqn:Lambdapm} is a specific choice 
of a retarded/advanced Green's homotopy in the more general sense of 
\cite[Definition~3.5]{GreenHyp}. Such level of generality plays a 
crucial role to ensure uniqueness of retarded/advanced Green's 
homotopies, see \cite[Proposition~3.9]{GreenHyp}. This general and more abstract
concept of a retarded/advanced Green's homotopy is not needed 
for the present paper because a Green's witness $W$ for the complex of 
linear differential operators $(F,Q)$ is given, which allows us to consider 
the explicit choices $\Lambda_\pm$ from \eqref{eqn:Lambdapm}. This considerably simplifies our analysis, 
in particular in view of naturality with respect to $M \in \Loc_m$, 
see Subsection~\ref{subsec:properties} below. 
\end{rem}

We shall now endow the complex $\FFF_\cc(M)[1]\in\Ch_\bbR$ of linear observables
with both a $(-1)$-shifted Poisson structure $\tau_{(-1)}$ and an unshifted one $\tau_{(0)}$. 
Furthermore, we shall construct a symmetric pairing $\tau_D$, called Dirac pairing, that 
trivializes $\tau_{(-1)}$, i.e.\ $\partial \tau_D = \tau_{(-1)}$.
The key ingredients for our construction are the integration pairing 
$(\!(-,-)\!)$ from \eqref{eqn:intpairing}
and the retarded and advanced Green's homotopies $\Lambda_\pm$ from \eqref{eqn:Lambdapm}.
By taking their difference, we define the {\it retarded-minus-advanced cochain map} 
\begin{flalign}\label{eqn:ret-adv-ch}
\Lambda := \Lambda_+ - \Lambda_- : \FFF_\cc(M)[1] \longrightarrow \FFF(M)
\end{flalign}
in $\Ch_\bbR$, where $\Lambda_\pm$ are regarded here as $0$-cochains in 
$[\FFF_\cc(M)[1], \FFF(M)] \in \Ch_\bbR$ (under the isomorphism $[\FFF_\cc(M)[1], \FFF(M)] \cong 
[\FFF_\cc(M), \FFF(M)][-1]$ in $\Ch_\bbR$ given by $(-1)^n$ in degree $n$).
Note that $\Lambda$ is a cochain map because $\partial \Lambda_\pm = j$. 
Similarly, we define the {\it Dirac homotopy} 
\begin{flalign}\label{eqn:Dirac-prop}
\Lambda_D := \frac{1}{2}\,\big(\Lambda_+ + \Lambda_-\big) \in [\FFF_\cc(M)[1], \FFF(M)]^0 
\end{flalign}
as a graded linear map of degree $0$.
We have seen in \eqref{eqn:delLambda} that the cochain map $j: \FFF_\cc(M) \to \FFF(M)$ 
is trivialized by $\Lambda_\pm$. It follows that a similar result 
is achieved by the Dirac homotopy $\Lambda_D$, namely 
\begin{flalign}\label{eqn:delLambdaD}
\partial\Lambda_D = j \in [\FFF_\cc(M)[1], \FFF(M)]^1 \quad.
\end{flalign} 

First, we define the {\it $(-1)$-shifted Poisson structure} 
\begin{flalign}\label{eqn:shiftedPoiss}
\xymatrix@C=5em{
\FFF_\cc(M)[1]^{\otimes 2} \ar@{-->}[rr]^-{\tau_{(-1)}} \ar[d]_-{\cong} && \bbR[1] \\ 
\FFF_\cc(M)[1]\otimes \bbR[1]\otimes \FFF_\cc(M) \ar[r]_-{\gamma \otimes j} & \bbR[1] \otimes \FFF_\cc(M)[1] \otimes \FFF(M) \ar[r]_-{\id \otimes (\!(-,-)\!)[1]} & \bbR[1] \otimes \bbR \ar[u]_-{\cong}
}
\end{flalign}
in $\Ch_\bbR$, where $\gamma$ denotes the symmetric braiding. 
To confirm that \eqref{eqn:shiftedPoiss} defines a $(-1)$-shifted Poisson structure 
we have to check symmetry $\tau_{(-1)} \circ \gamma = \tau_{(-1)}$. 
Indeed, for all homogeneous sections $\psi_1, \psi_2 \in \FFF_\cc(M)[1]$ with compact support, one has 
\begin{flalign}
\tau_{(-1)} \gamma (\psi_1 \otimes \psi_2) &= (-1)^{(\lvert \psi_1 \rvert +1) \lvert \psi_2 \rvert} \int_M{(\psi_2, \psi_1)\, \vol_M} \nn \\
&= (-1)^{\lvert \psi_1 \rvert} \int_M{(\psi_1, \psi_2)\, \vol_M} \nn \\
&= \tau_{(-1)}(\psi_1 \otimes \psi_2) \quad,
\end{flalign}
where in the first and last steps we used the definition of $\tau_{(-1)}$ from \eqref{eqn:shiftedPoiss}
and in the second step we used that the fiber metric $(-,-)$ is graded anti-symmetric, 
see Definition~\ref{defi:CLDO-fibermetric}. 
\sk

Second, we define the {\it unshifted Poisson structure} 
\begin{flalign}\label{eqn:Poiss}
\xymatrix@C=21em{
\FFF_\cc(M)[1]^{\otimes 2} \ar@{-->}[r]^-{\tau_{(0)}} \ar[d]_-{\cong \otimes \Lambda} & \bbR \\ 
\bbR[1] \otimes \FFF_\cc(M) \otimes \FFF(M) \ar[r]_-{\id \otimes (\!(-,-)\!)} & \bbR[1] \otimes \bbR[-1] \ar[u]_-{\cong}
}
\end{flalign}
in $\Ch_\bbR$. To confirm that \eqref{eqn:Poiss} defines an unshifted 
Poisson structure we have to check anti-symmetry 
$\tau_{(0)} \circ \gamma = - \tau_{(0)}$. 
Indeed, for all homogeneous sections $\psi_1, \psi_2 \in \FFF_\cc(M)[1]$ with compact support, one has
\begin{flalign}\label{eqn:Poiss-asym}
\tau_{(0)} \gamma (\psi_1 \otimes \psi_2) &= (-1)^{\lvert \psi_1 \rvert \, \lvert \psi_2 \rvert} \int_M{(\psi_2, GW \psi_1)\,\vol_M} \nn \\
&= - (-1)^{\lvert \psi_1 \rvert} \int_M{(GW \psi_1, \psi_2)\,\vol_M} \nn \\
&= - \int_M (\psi_1, WG \psi_2) \nn \\
&= - \tau_{(0)} (\psi_1 \otimes \psi_2) \quad,
\end{flalign}
where in the first and last steps we used the definition of $\tau_{(0)}$ from \eqref{eqn:Poiss}, 
in the second step we used that the fiber metric $(-,-)$ is 
graded anti-symmetric, see Definition~\ref{defi:CLDO-fibermetric}, 
and in the third step we used Definition~\ref{defi:witness}~(iii) 
and Remark~\ref{rem:propW}~(3). 
\sk

Finally, we define the {\it Dirac pairing} 
\begin{flalign}\label{eqn:pairDirac}
\xymatrix@C=18em{
\FFF_\cc(M)[1]^{\otimes 2} \ar@{-->}[r]^-{\tau_D} \ar[d]_-{\cong \otimes \Lambda_D} & \bbR \\ 
\bbR[1] \otimes \FFF_\cc(M) \otimes \FFF(M) \ar[r]_-{\id \otimes (\!(-,-)\!)} & \bbR[1] \otimes \bbR[-1] \ar[u]_-{\cong}
}
\end{flalign}
as a graded linear map of degree $0$, i.e.\ $\tau_D \in [\FFF_\cc(M)[1]^{\otimes 2}, \bbR]^0$. 
The same calculation as in \eqref{eqn:Poiss-asym} (with the Dirac propagator $G_D$ 
replacing the retarded-minus-advanced one $G$) proves 
symmetry $\tau_D \circ \gamma = \tau_D$. 
Note that $\tau_D$ trivializes the $(-1)$-shifted Poisson structure $\tau_{(-1)}$, i.e.\
\begin{flalign}\label{eqn:partial-tauD}
\partial \tau_D = \tau_{(-1)} \quad.
\end{flalign}
Indeed, for all homogeneous sections $\psi_1,\psi_2 \in \FFF_\cc(M)[1]$ 
with compact support, one has 
\begin{flalign}
\partial \tau_D (\psi_1 \otimes \psi_2) &= \int_M{(Q \psi_1, \Lambda_D \psi_2)\,\vol_M} - (-1)^{\lvert \psi_1 \rvert} \int_M{(\psi_1, \Lambda_D Q_{[1]}\psi_2)\,\vol_M} \nn \\
&= (-1)^{\lvert \psi_1 \rvert} \int_M{(\psi_1, (Q\, \Lambda_D - \Lambda_D\, Q_{[1]}) \psi_2)\,\vol_M} \nn \\
&= (-1)^{\lvert \psi_1 \rvert} \int_M{(\psi_1, \psi_2)\,\vol_M} \nn \\
&= \tau_{(-1)}(\psi_1 \otimes \psi_2) \quad,
\end{flalign}
where in the first step we used the definition of $\tau_D$ from \eqref{eqn:pairDirac}, 
in the second step we used \eqref{eqn:compatibility-pairing}, 
in the third step we used \eqref{eqn:delLambdaD} and in the last step 
we used the definition of $\tau_{(-1)}$ from \eqref{eqn:shiftedPoiss}.

\subsection{\label{subsec:properties}Properties of \texorpdfstring{$\tau_{(-1)}$, $\tau_{(0)}$ and $\tau_D$}{the structures}}
Let us now consider a collection $(F_M, Q_M, (-,-)_M, W_M)_{M \in \Loc_m}$ 
of free BV theories, indexed by $M \in \Loc_m$. 
We assume that $(F_M, Q_M, (-,-)_M, W_M)_{M \in \Loc_m}$ 
is natural with respect to the morphisms $f: M \to N$ in $\Loc_m$ 
in the sense of the next definition. 

\begin{defi}\label{defi:natBV}
A {\it natural collection of free BV theories} 
$(F_M, Q_M, (-,-)_M, W_M)_{M \in \Loc_m}$ consists of 
natural vector bundles $\mathsf{F}^n$, 
natural linear differential operators 
$Q^n: \Gamma(\mathsf{F}^n) \to \Gamma(\mathsf{F}^{n+1})$ and 
$W^n: \Gamma(\mathsf{F}^n) \to \Gamma(\mathsf{F}^{n-1})$ and natural 
fiber metrics $(-,-)^n: \mathsf{F}^n \otimes \mathsf{F}^{1-n} \to \bbR$, 
for all $n \in \bbZ$, such that, for all $M \in \Loc_m$, 
$(F_M, Q_M, (-,-)_M, W_M)$ is a free BV theory in the sense of
Definition~\ref{defi:BV}. 
\end{defi}

The concepts of natural vector bundles, natural fiber metrics and 
natural differential operators, which are relevant 
for the definition above, are recalled in Appendix~\ref{app:natural}. 

\begin{ex}\label{ex:natBV}
Let us upgrade Examples~\ref{ex:P}, \ref{ex:CS} 
and \ref{ex:MW} to natural collections of free BV theories as formalized 
in Definition~\ref{defi:natBV}. 
\sk

Concerning the natural upgrade of Example~\ref{ex:P}, 
it suffices to take as input a natural vector bundle $\mathsf{E}$ 
endowed with a natural fiber metric $\langle-,-\rangle$ and 
a natural linear differential operator $P$ defined on $\mathsf{E}$, 
whose components $P_M$ are formally self-adjoint Green hyperbolic operators, 
for all $M \in \Loc_m$. Taking the natural Green's witness 
given, for all $M \in \Loc_m$, by the identity as in Example~\ref{ex:P}, 
one obtains a natural collection of free BV theories. For instance, 
the natural collection of free BV theories associated with the 
real Klein-Gordon field of mass $m \geq 0$ is obtained 
by taking $\mathsf{E} = \und{\bbR}$ to be 
the natural trivial line bundle, whose components are 
the trivial line bundles $M \times \bbR \to M$, 
endowed with its canonical natural fiber metric $\langle-,-\rangle$, 
given component-wise by the multiplication on $\bbR$, 
and the Klein-Gordon operator $P = \Box + m^2$, whose naturality 
follows from the fact that morphisms of $\Loc_m$ are isometries. 
\sk

The upgrade of Examples \ref{ex:CS} and \ref{ex:MW} to natural 
collections of free BV theories is obtained as follows. 
First, consider the natural vector bundles of differential $k$-forms 
$\Lambda^k$, whose components are the vector bundles $\Lambda^k M\to M$. 
(Naturality follows from the fact that morphisms of $\Loc_m$ 
are open embeddings.) 
Second, note that the wedge product $\wedge$ of differential forms, 
the Hodge operator $\ast$ and the de Rham differential $\dd$ 
are natural with respect to the morphisms in $\Loc_m$. 
(This relies also on the fact that morphisms of $\Loc_m$ 
are orientation-preserving isometries.) This defines
a natural structure in the sense of Definition~\ref{defi:natBV} 
on the collection of free BV theories 
from Examples \ref{ex:CS} and \ref{ex:MW},
which describe linear Chern-Simons theory and Maxwell $p$-forms.
\end{ex}

We summarize below the key facts that will play a crucial role 
in the rest of the paper. These are part 
of Definition \ref{defi:natBV}, or follow from it 
and the constructions outlined in Appendix~\ref{app:natural}, 
especially \eqref{eqn:pull-push}, \eqref{eqn:nat-int-pair}, 
\eqref{eqn:diffop-nat} and \eqref{eqn:diffop-nat-bis}. 
For all $f : M \to N$ in $\Loc_m$, one has the following: 
\begin{enumerate}[label={(\arabic*)}]
\item A pushforward cochain map
$f_\ast : \FFF_\cc(M) \to \FFF_\cc(N)$ in $\Ch_\bbR$ 
for compactly supported sections and a pullback cochain map 
$f^\ast : \FFF(N) \to \FFF(M)$ in $\Ch_\bbR$ for sections. 
(In particular, $Q_N\, f_\ast = f_\ast\, Q_M$ and $Q_M\, f^\ast = f^\ast\, Q_N$.)

\item Naturality of the integration pairing \eqref{eqn:intpairing}, 
i.e.\ the diagram
\begin{flalign}\label{eqn:natsquare-int}
\xymatrix@C=4em{
\FFF_\cc(M) \otimes \FFF(N) \ar[r]^-{\id \otimes f^\ast} \ar[d]_-{f_\ast \otimes \id} & \FFF_\cc(M) \otimes \FFF(M) \ar[d]^-{(\!(-,-)\!)_M}\\
\FFF_\cc(N) \otimes \FFF(N) \ar [r]_-{(\!(-,-)\!)_N}& \bbR[-1] 
}
\end{flalign}
in $\Ch_\bbR$ commutes.

\item Naturality of Green's witnesses, i.e.\ $W_N\, f_\ast = f_\ast\, W_M$ 
and $W_M\, f^\ast = f^\ast\, W_N$.
\end{enumerate}
(1) and (3) entail that also the Green hyperbolic operators $P_M := Q_M\, W_M + W_M\, Q_M$
are natural, hence $P_N\, f_\ast = f_\ast\, P_M$ and 
$P_M\, f^\ast=f^\ast\, P_N$, for all $f : M \to N$ in $\Loc_m$. 
As a consequence of the naturality of $P = (P_M)_{M \in \Loc_m}$ 
and of the theory of Green hyperbolic operators, 
for all $f : M \to N$ in $\Loc_m$, one has the usual naturality property
$f^\ast\, G^N_\pm\, f_\ast = G^M_\pm$ for the retarded/advanced 
Green's operator $G^M_\pm$ associated with $P_M$, see \cite{BG}, 
as well as the analogs $f^\ast\, G^N\, f_\ast = G^M$ and
and $f^\ast\, G^N_D\, f_\ast = G^M_D$ for the 
retarded-minus-advanced propagator $G^M := G^M_+ - G^M_-$ 
and for the Dirac propagator $G^M_D := \tfrac{1}{2}(G^M_+ + G^M_-)$. 
Therefore, the retarded/advanced Green's homotopies
$\Lambda^M_\pm := W_M\, G^M_\pm$, the retarded-minus advanced cochain maps
$\Lambda^M := \Lambda^M_+ - \Lambda^M_-$ and the Dirac homotopies
$\Lambda^M_D := \tfrac{1}{2}(\Lambda^M_+ + \Lambda^M_-)$ inherit the same naturality, 
that is, for all $\Loc_m$-morphisms $f : M \to N$, one has 
\begin{flalign}\label{eqn:nat-Lambdapm}
f^\ast\, \Lambda^N_\pm\, f_\ast = \Lambda^M_\pm \quad, \qquad f^\ast\, \Lambda^N\, f_\ast = \Lambda^M \quad, \qquad f^\ast\, \Lambda^N_D\, f_\ast = \Lambda^M_D \quad.
\end{flalign}
Finally, for all $M \in \Loc_m$, let us consider the $(-1)$-shifted Poisson structures $\tau^M_{(-1)}$ from 
\eqref{eqn:shiftedPoiss}, the unshifted Poisson structures $\tau^M_{(0)}$ from \eqref{eqn:Poiss} 
and the Dirac pairings $\tau^M_D$ from \eqref{eqn:pairDirac}, 
with additional superscripts emphasizing the underlying object in $\Loc_m$. 
As a consequence of \eqref{eqn:nat-Lambdapm} and 
of the naturality of the integration pairing, 
see \eqref{eqn:natsquare-int}, one obtains the following result. 
\begin{lem}
For all $f : M \to N$ in $\Loc_m$, the following holds 
\begin{flalign}\label{eqn:nat-tau}
\tau^N_{(-1)} \circ (f_\ast \otimes f_\ast) = \tau^M_{(-1)} \quad, \qquad \tau^N_{(0)} \circ (f_\ast \otimes f_\ast) = \tau^M_{(0)} \quad, \qquad \tau^N_D \circ (f_\ast \otimes f_\ast) = \tau^M_D \quad.
\end{flalign} 
\end{lem}
\begin{proof}
The first equality follows immediately from \eqref{eqn:shiftedPoiss} 
and \eqref{eqn:natsquare-int}. To prove also the second equality, 
recall \eqref{eqn:Poiss} and, 
for all $\psi_1,\psi_2 \in \FFFF_\cc(M)$, compute 
\begin{flalign}
\tau^N_{(0)}(f_\ast \psi_1 \otimes f_\ast \psi_2) = (\!(\psi_1, f^\ast \Lambda^N f_\ast \psi_2)\!)_M = (\!(\psi_1, \Lambda^M \psi_2)\!)_M = \tau^M_{(0)}(\psi_1 \otimes \psi_2) \quad, 
\end{flalign}
where we used \eqref{eqn:natsquare-int} in the first and 
\eqref{eqn:nat-Lambdapm} in the second step. 
Recalling \eqref{eqn:pairDirac}, the proof of the third equality is the same. 
\end{proof}

This means that $\tau_{(-1)}^M$, $\tau_{(0)}^M$ and $\tau_D^M$ 
are the components at $M \in \Loc_m$ of the natural transformations
$\tau_{(-1)}$, $\tau_{(0)}$ and $\tau_D$, respectively. 
In particular, the assignment to each object $M \in \Loc_m$ 
of the Poisson cochain complex $(\FFF_\cc(M)[1], \tau^M_{(0)}) \in \PCh_\bbR$ 
and to each morphism $f:M\to N$ in $\Loc_m$ of the pushforward 
$f_\ast: (\FFF_\cc(M)[1], \tau^M_{(0)}) \to (\FFF_\cc(N)[1], \tau^N_{(0)})$ in $\PCh_\bbR$
defines a functor $(\FFF_\cc[1],\tau_{(0)}): \Loc_m \to \PCh_\bbR$. 
(Note that \eqref{eqn:nat-tau} expresses the necessary compatibility of $f_\ast$ 
with the unshifted Poisson structures $\tau^M_{(0)}$ and $\tau^N_{(0)}$.)
\sk

The next result shows that classical
analogs of the Einstein causality and time-slice axioms hold. 
To simplify our notation, from now on we shall suppress 
the superscripts and subscripts 
emphasizing the underlying object of $\Loc_m$, 
whenever this information can be inferred from the context.
\begin{theo}\label{th:time-slice-cl}
Let $(F_M, Q_M, (-,-)_M, W_M)_{M \in \Loc_m}$ be a natural 
collection of free BV theories. 
\begin{enumerate}[label={(\alph*)}]
\item For all causally disjoint morphisms 
$f_1: M_1 \to N \leftarrow M_2: f_2$ in $\Loc_m$, 
\begin{flalign}
\tau_{(0)} \circ (f_{1\,\ast} \otimes f_{2\,\ast}) = 0 
\end{flalign}
vanishes.

\item For all Cauchy morphisms $f : M \to N$ in $\Loc_m$, 
the pushforward cochain map
\begin{flalign}\label{eqn:push}
f_\ast : \FFF_\cc (M)[1] \longrightarrow \FFF_\cc (N)[1]
\end{flalign}
in $\Ch_\bbR$ is a quasi-isomorphism.
\end{enumerate}
\end{theo}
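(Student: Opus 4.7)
For part (a), I would argue by a direct support analysis. Unfolding the definition \eqref{eqn:Poiss} of the unshifted Poisson structure, one finds that $\tau^N_{(0)}(\psi_1 \otimes \psi_2)$ is, up to the shift identifications and a sign, given by $\int_N (\psi_1, \Lambda^N \psi_2)_N\, \vol_N$, with $\Lambda^N = W_N(G^N_+ - G^N_-)$. For $\psi_i := f_{i\ast}\psi_i'$, the support of $\Lambda^N f_{2\ast}\psi_2'$ lies in $J_N(f_2(M_2))$, since $W_N$ is support-preserving and $G^N$ propagates causally. Causal disjointness is symmetric in $f_1, f_2$ and amounts to $J_N(f_2(M_2)) \cap f_1(M_1) = \emptyset$, so the supports of $f_{1\ast}\psi_1'$ and $\Lambda^N f_{2\ast}\psi_2'$ are disjoint and the integrand vanishes identically.

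For part (b), I would construct an explicit cochain homotopy inverse of $f_\ast$ by means of a ``time partition of unity''. Since $f$ is a Cauchy morphism, one can select two Cauchy surfaces $\Sigma_\pm$ of $N$ with $\Sigma_+$ in the strict chronological future of $\Sigma_-$ and both contained in $f(M)$, together with a smooth cutoff $\chi : N \to [0,1]$ with $\chi = 1$ on $J_N^-(\Sigma_-)$ and $\chi = 0$ on $J_N^+(\Sigma_+)$; the ``slab'' $\supp(\dd\chi) \subseteq J_N^+(\Sigma_-) \cap J_N^-(\Sigma_+)$ then sits inside $f(M)$ by causal convexity of $f(M)$. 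Setting $h^N := \chi\, \Lambda^N_+ + (1-\chi)\, \Lambda^N_-$ and $r := -f^\ast\, [Q_N, \chi]\, \Lambda^N$, the standard global hyperbolicity fact that the intersection of the future (respectively past) of a compact set with the past (respectively future) of a Cauchy surface is compact ensures that $h^N$ takes values in $\FFF_\cc(N)$, while the slab containment forces $r$ to take values in $\FFF_\cc(M)$. A direct Leibniz computation exploiting $\partial \Lambda^N_\pm = j$ (the inclusion $\FFF_\cc(N) \hookrightarrow \FFF(N)$) yields the cochain homotopy identity $\id_{\FFF_\cc(N)} - f_\ast r = \partial h^N$, so that $f_\ast r$ is homotopic to the identity.

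For the reverse composition, I would pull the construction back to $M$ by setting $\tilde\chi := f^\ast \chi$ and $h^M := \tilde\chi\, \Lambda^M_+ + (1-\tilde\chi)\,\Lambda^M_-$; an analogous Leibniz computation gives $\xi - Q_M h^M\xi - h^M Q_M\xi = -[Q_M,\tilde\chi]\Lambda^M\xi$ for all $\xi \in \FFF_\cc(M)$. To match this with $rf_\ast$, I would exploit the naturality relations $f_\ast\tilde\chi = \chi f_\ast$, $f_\ast W_M = W_N f_\ast$, $f_\ast Q_M = Q_N f_\ast$, together with the key observation that, although $G^N f_\ast \neq f_\ast G^M$ in general, these two sections of $N$ do agree on the open set $f(M) \subseteq N$ thanks to $f^\ast G^N f_\ast = G^M$ and causal convexity; since $[Q_N,\chi]$ is supported inside the slab $\subseteq f(M)$, this forces the slab-localized identity $f_\ast\bigl([Q_M,\tilde\chi]\Lambda^M\bigr) = [Q_N,\chi]\Lambda^N f_\ast$. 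Injectivity of the pushforward $f_\ast$ then yields $\id_{\FFF_\cc(M)} - rf_\ast = \partial h^M$, so $r$ is a two-sided homotopy inverse of $f_\ast$, which is therefore a quasi-isomorphism; the claim for the $1$-shift $\FFF_\cc[1]$ follows at once. The main obstacle will be the verification of this slab-localized matching identity, where one must carefully balance the naturality of the Green's witness, the causal convexity of $f(M)$, and the location of $\supp(\dd\chi)$.
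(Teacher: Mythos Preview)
Your proposal is correct and follows essentially the same strategy as the paper: for (a) a direct support argument, and for (b) the construction of a quasi-inverse and two homotopies out of a time cutoff $\chi$ and the retarded/advanced Green's homotopies $\Lambda_\pm$. Your objects $h^N$, $r$ and $h^M$ coincide (up to sign conventions and the $[1]$-shift) with the paper's $\eta$, $g$ and $\zeta$.

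The one noteworthy difference is in the handling of the second homotopy. You build $h^M$ from scratch on $M$ using $\Lambda^M_\pm$ and $\tilde\chi=f^\ast\chi$, and then must verify the ``slab-localized matching identity'' $f_\ast([Q_M,\tilde\chi]\Lambda^M)=[Q_N,\chi]\Lambda^N f_\ast$ via the naturality $f^\ast\Lambda^N f_\ast=\Lambda^M$ and the support of $[Q_N,\chi]$; you correctly flag this as the main obstacle. The paper avoids this step entirely: it \emph{defines} $\zeta$ by the relation $f_\ast\zeta:=\eta f_\ast$ (checking only that the right-hand side has image in $f_\ast\FFF_\cc(M)$, a pure support argument), and then obtains $f_\ast(\partial\zeta)=(\partial\eta)f_\ast=f_\ast(\id-gf_\ast)$ immediately from the already-established $N$-side identity, concluding by injectivity of $f_\ast$. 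This sidesteps any comparison between $\Lambda^M$ and $\Lambda^N$ and makes your ``main obstacle'' disappear.
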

\begin{proof}
Item~(a) follows from $J_N(f_1(M_1)) \cap f_2(M_2) = \emptyset$ 
(because $f_1$ and $f_2$ are causally disjoint), the definition of 
the unshifted Poisson structure $\tau_{(0)}$ and the support properties 
of retarded and advanced Green's operators, 
see \eqref{eqn:Poiss} and Definition~\ref{defi:Gpm}. 
\sk
	
To prove also item~(b), we shall construct a quasi-inverse 
$g : \FFF_\cc(N)[1] \to \FFF_\cc(M)[1]$ in $\Ch_\bbR$
for $f_\ast$ and homotopies $\eta \in [\FFF_\cc(N)[1], \FFF_\cc(N)[1]]^{-1}$, witnessing that $f_\ast\, g \sim \id$, 
and $\zeta \in [\FFF_\cc(M)[1], \FFF_\cc(M)[1]]^{-1}$, 
witnessing that $g\, f_\ast \sim \id$. 
Recalling that $f: M \to N$ in $\Loc_m$ 
is by hypothesis a Cauchy morphism, let us consider two spacelike Cauchy surfaces 
$\Sigma_\pm \subset N$ lying inside the image of $f$ 
such that $\Sigma_+ \subset I^+_N(\Sigma_-)$ is contained 
in the chronological future of $\Sigma_-$. Choose a partition of unity
$\{\chi_+, \chi_-\}$ subordinate to the open cover 
$\{I^+_N(\Sigma_-), I^-_N(\Sigma_+)\}$ of $N$. 

\paragraph{Quasi-inverse $g$:} 
We construct a candidate quasi-inverse as the (unique) cochain map
\begin{subequations}\label{eqn:g}
\begin{flalign}
g : \FFF_\cc(N)[1] \longrightarrow \FFF_\cc(M)[1]
\end{flalign}
in $\Ch_\bbR$ that satisfies the equation
\begin{flalign}
j\, f_\ast\, g = \mp \partial(\chi_\pm\, \Lambda): \FFF_\cc(N)[1] \longrightarrow \FFF(N)[1]
\end{flalign}
\end{subequations}
in $\Ch_\bbR$, where $j : \FFF_\cc(N)[1] \to \FFF(N)[1]$ in $\Ch_\bbR$ 
denotes the inclusion forgetting compact supports, the $(-1)$-cochain 
$\chi_\pm \in [\FFF(N),\FFF(N)[1]]^{-1}$ denotes 
multiplication by the partition function $\chi_\pm$ and 
$\partial$ denotes the internal hom differential of 
$[\FFF_\cc(N)[1],\FFF_\cc(N)[1]] \in \Ch_\bbR$.  
Such cochain map $g$ exists (uniquely) because $\mp \partial(\chi_\pm\, \Lambda)$ 
is manifestly a cochain map and, for all 
sections $\psi \in \FFF_\cc(N)[1]$, the section
$-(\partial(\chi_+\, \Lambda)) \psi = (\partial(\chi_-\, \Lambda)) \psi \in \FFF(N)[1]$
lies in the image of the degree-wise injective cochain map $j\, f_\ast$. 
Indeed, the support of the section 
$-(\partial(\chi_+\, \Lambda)) \psi = (\partial(\chi_-\, \Lambda)) \psi$ 
is contained in the compact subset 
$J_N(\supp(\psi)) \cap J^+_N(\Sigma_-) \cap J^-_N(\Sigma_+) \subseteq f(M)$. (The latter subset is compact by \cite[Corollary~A.5.4]{BGP} 
and contained in $f(M)$ because by construction 
$J^+_N(\Sigma_-) \cap J^-_N(\Sigma_+) \subseteq f(M)$.) 

\paragraph{Homotopy $\eta$:}
We construct a candidate homotopy as the (unique) $(-1)$-cochain
\begin{subequations}\label{eqn:eta}
\begin{flalign}
\eta \in [\FFF_\cc(N)[1], \FFF_\cc(N)[1]]^{-1}
\end{flalign}
that satisfies the equation
\begin{flalign}\label{subeqn:tildeeta}
j\, \eta = -\chi_-\, \Lambda_+ - \chi_+\, \Lambda_- \in [\FFF_\cc(N)[1], \FFF(N)[1]]^{-1} \quad,
\end{flalign}
\end{subequations}
where $\Lambda_\pm$ are regarded here as $0$-cochains in 
$[\FFF_\cc(N)[1], \FFF(N)] \in \Ch_\bbR$ 
(under the isomorphism $[\FFF_\cc(N)[1], \FFF(N)] \cong [\FFF_\cc(N), \FFF(N)][-1]$ in $\Ch_\bbR$ given by $(-1)^n$ in degree $n$). 
Such $(-1)$-cochain $\eta$ exists (uniquely) because, for all sections 
$\psi \in \FFF_\cc(N)[1]$, the section 
$\chi_\mp \Lambda_\pm \psi \in \FFF(N)[1]$ lies in the image 
of the degree-wise injective cochain map $j$ that forgets compact supports.
Indeed, the support of the section 
$\chi_\mp \Lambda_\pm \psi \in \FFF(N)[1]$ 
is contained in the compact subset 
$J^\mp_N(\Sigma_\pm) \cap J^\pm_N(\supp(\psi)) \subseteq N$. 
Let us check that $\partial \eta = \id - f_\ast\, g$. 
Since $j$ is degree-wise injective, this follows from 
\begin{flalign}
j\, (\partial\eta) = \partial(- \chi_-\, \Lambda_+ - \chi_+\, \Lambda_-) = j + (\partial \chi_+)\, \Lambda = j\, (\id - f_\ast\, g) \quad,
\end{flalign}
where in the first step we used that $j$ is a cochain map 
and the equation defining $\eta$, in the second step we used 
the Leibniz rule of $\partial$ with respect to the composition, 
$\chi_+ + \chi_- = 1$ (hence $\partial \chi_+ = - \partial \chi_-$), 
$\partial \Lambda_\pm = j$ and $\Lambda = \Lambda_+ - \Lambda_-$ 
and in the last step we used $\partial \Lambda = 0$ and \eqref{eqn:g}.

\paragraph{Homotopy $\zeta$:}
We construct a candidate homotopy as the (unique) $(-1)$-cochain
\begin{subequations}\label{eqn:zeta}
\begin{flalign}
\zeta \in [\FFF_\cc(M)[1], \FFF_\cc(M)[1]]^{-1}
\end{flalign}
that satisfies the equation 
\begin{flalign}
f_\ast\, \zeta = \eta\, f_\ast \in [\FFF_\cc(M)[1], \FFF_\cc(N)[1]]^{-1}\quad.
\end{flalign}
\end{subequations}
Such $(-1)$-cochain $\zeta$ exists (uniquely) because, 
for all homogeneous sections $\psi \in \FFF_\cc(M)[1]$, 
the section $\chi_\mp \Lambda_\pm f_\ast \psi \in \FFF(N)[1]$ 
lies in the image of the degree-wise injective cochain map $f_\ast$. 
Indeed, the support of the section 
$\chi_\mp \Lambda_\pm f_\ast \psi \in \FFF(N)[1]$ 
is contained in the compact subset 
$J^\mp_N(\Sigma_\pm) \cap J^\pm_N(f(\supp(\psi))) \subseteq f(M)$. 
Let us check that $\partial \zeta = \id - g\, f_\ast$. 
Since $f_\ast$ is degree-wise injective, this follows from 
\begin{flalign}
f_\ast\, (\partial\zeta) = (\partial\eta)\, f_\ast = f_\ast\, (\id - g\,f_\ast)
\end{flalign}
where in the first step we used that $f_\ast$ is a cochain map 
and the definition of $\zeta$ and in the last step we used 
$\partial\eta = \id - f_\ast\, g$. 
\end{proof}

To conclude this section, we record a simple result 
relating $\tau_{(0)}$ and $\tau_D$ via time-ordering. 
\begin{propo}\label{propo:time-ord}
Let $(F_M, Q_M, (-,-)_M, W_M)_{M \in \Loc_m}$ be a natural 
collection of free BV theories. 
Then, for all time-ordered pairs $(f_1, f_2) : (M_1,M_2) \to N$
in $\Loc_m$, 
\begin{flalign}\label{eqn:time-ord-cl}
\tau_D \circ (f_{1\,\ast} \otimes f_{2\,\ast}) = \frac{1}{2}\, \tau_{(0)} \circ (f_{1\,\ast} \otimes f_{2\,\ast}) \quad.
\end{flalign}
\end{propo}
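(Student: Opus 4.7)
The plan is to prove the equivalent vanishing statement $\big(\tau_D - \tfrac{1}{2}\tau_{(0)}\big) \circ (f_{1\,\ast} \otimes f_{2\,\ast}) = 0$ by reducing it to a support disjointness argument. The starting point is the elementary identity
\begin{flalign*}
\Lambda_D - \tfrac{1}{2}\Lambda \;=\; \tfrac{1}{2}(\Lambda_+ + \Lambda_-) - \tfrac{1}{2}(\Lambda_+ - \Lambda_-) \;=\; \Lambda_-\quad,
\end{flalign*}
which, via the definitions \eqref{eqn:Poiss} and \eqref{eqn:pairDirac} of $\tau_{(0)}$ and $\tau_D$, shows that $\tau_D - \tfrac{1}{2}\tau_{(0)}$ is realized, up to the natural shift isomorphisms in the definitions of $\tau_{(0)}$ and $\tau_D$, by applying $\Lambda^N_-$ to the second slot and then pairing via $(\!(-,-)\!)_N$. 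Evaluating on homogeneous $\psi_1 \in \FFF_\cc(M_1)[1]$ and $\psi_2 \in \FFF_\cc(M_2)[1]$ and using the naturality square \eqref{eqn:natsquare-int}, this reduces \eqref{eqn:time-ord-cl} to showing that the expression
\begin{flalign*}
\pm\,(\!(f_{1\,\ast}\psi_1,\;\Lambda^N_-\, f_{2\,\ast}\psi_2)\!)_N
\end{flalign*}
vanishes, where the sign (irrelevant for vanishing) is dictated by the shift conventions.

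Next, I would chase supports. Since $\Lambda^N_- = W_N\, G^N_-$ by \eqref{eqn:Lambdapm} and $W_N$ is a differential operator (hence support preserving) while $G^N_-$ satisfies $\supp(G^N_- \varphi) \subseteq J^-_N(\supp(\varphi))$ by Definition~\ref{defi:Gpm}, we obtain
\begin{flalign*}
\supp\!\big(\Lambda^N_-\, f_{2\,\ast}\psi_2\big) \;\subseteq\; J^-_N\!\big(f_2(M_2)\big)\quad.
\end{flalign*}
The time-ordering hypothesis $J^+_N(f_1(M_1)) \cap f_2(M_2) = \emptyset$ is equivalent to $f_1(M_1) \cap J^-_N(f_2(M_2)) = \emptyset$. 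Since $\supp(f_{1\,\ast}\psi_1) \subseteq f_1(M_1)$, the two sections have disjoint supports in $N$, so their integration pairing vanishes by locality of $(\!(-,-)\!)_N$.

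The only genuine work here is bookkeeping the degree shifts and signs when unpacking the definitions of $\tau_{(0)}$ and $\tau_D$ through the identification $[\FFF_\cc(N)[1], \FFF(N)] \cong [\FFF_\cc(N), \FFF(N)][-1]$, which introduces $(-1)^n$ factors in degree $n$; none of this affects the vanishing conclusion. Conceptually, the result is the familiar statement that the Dirac and retarded-minus-advanced propagators agree up to a factor of $\tfrac{1}{2}$ on time-ordered pairs of arguments, lifted from the Green's operator setting to Green's homotopies and to arbitrary natural collections of free BV theories.
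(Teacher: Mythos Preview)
Your proof is correct and takes essentially the same approach as the paper: both rest on the vanishing of the $\Lambda_-$ contribution by the support property of $G_-$ together with the time-ordering condition $f_1(M_1) \cap J^-_N(f_2(M_2)) = \emptyset$. The only cosmetic difference is that you compute the difference $\tau_D - \tfrac{1}{2}\tau_{(0)}$ directly via $\Lambda_D - \tfrac{1}{2}\Lambda = \Lambda_-$, whereas the paper shows both sides equal the pairing with $\tfrac{1}{2}\Lambda_+$; your invocation of the naturality square \eqref{eqn:natsquare-int} is unnecessary here since everything already lives on $N$, but this does no harm.
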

\begin{proof}
For all $\psi_1 \in \FFF_\cc(M_1)[1]$
and $\psi_2 \in \FFF_\cc(M_2)[1]$, recalling the support properties 
of retarded and advanced Green's operators from 
Definition~\ref{defi:Gpm}, one computes
\begin{flalign}
\frac{1}{2}\,\tau_{(0)}(f_{1\, \ast} \psi_1 \otimes f_{2\, \ast} \psi_2) &= \frac{1}{2}\, \int_N \left( f_{1\, \ast} \psi_1, \Lambda f_{2\, \ast} \psi_2 \right)_N\, \vol_N \nn \\
&= \frac{1}{2} \, \int_N{(f_{1\, \ast} \psi_1, \Lambda_+ f_{2\, \ast}\psi_2)_N\,\vol_N} \nn \\
&= \int_N{(f_{1\, \ast} \psi_1, \Lambda_D f_{2\, \ast} \psi_2)_N\,\vol_N} \nn \\ 
&= \tau_D(f_{1\, \ast} \psi_1 \otimes f_{2\, \ast} \psi_2) \quad.
\end{flalign}
The first step uses the definition of the unshifted 
Poisson structure $\tau_{(0)}$, see \eqref{eqn:Poiss}. 
Both the second and third steps use that 
$f_1(M_1) \cap J^-_N(f_2(M_2)) = \emptyset$ is empty 
(because $(f_1,f_2)$ is time-ordered), in combination either with 
$\Lambda = \Lambda_+ - \Lambda_-$ or with 
$\Lambda_D =\tfrac{1}{2} (\Lambda_+ + \Lambda_-)$. 
The last step uses the definition of the Dirac pairing $\tau_D$, 
see \eqref{eqn:pairDirac}. 
\end{proof}

%%%%%%%%%%%%%%%%%%%%%%%%%%%%%%%%%%%%%%%%%%%%%%%%
%%%%%%%%%%%%%%%%%%%%%%%%%%%%%%%%%%%%%%%%%%%%%%%%

\section{\label{sec:Quant}Quantizations and comparison}
In this section we shall present two a priori different approaches to the quantization 
of a natural collection $(F_M, Q_M, (-,-)_M, W_M)_{M \in \Loc_m}$ 
of free BV theories. 
First, in Subsection~\ref{subsec:BV-quant} we shall construct a time-orderable
prefactorization algebra $\F \in \tPFA$ by deforming the ordinary 
differential of the symmetric algebra $\Sym (\FFF_\cc(M)[1])$ 
generated by linear observables 
with the BV Laplacian, as prescribed by the BV formalism \cite{CG, CG2}. 
Second, in Subsection \ref{subsec:star-prod} we shall construct an 
AQFT $\A \in \AQFT$ by deforming the 
commutative multiplication of $\Sym (\FFF_\cc(M)[1]) \in \dgAlg_\bbC$ 
to the non-commutative Moyal-Weyl star product. 
These two constructions involve different input data. 
More specifically, the time-orderable prefactorization algebra 
$\F \in \tPFA$ relies only on the natural $(-1)$-shifted fiber metric $(-,-)$ 
through the natural $(-1)$-shifted Poisson structure $\tau_{(-1)}$ 
(except for the time-slice axiom), 
while the AQFT $\A \in \AQFT$ relies also
on the natural Green's witness $W$ through 
the natural unshifted Poisson structure $\tau_{(0)}$. 
Last, we shall show in Subsection~\ref{subsec:comp} that, when
both $(-,-)$ and $W$ are given, the natural Dirac pairing 
$\tau_D$ leads to an isomorphism 
$T: \F \to \F_\A$ in $\tPFA$ to the time-orderable prefactorization 
algebra $\F_\A \in \tPFA$ canonically associated with $\A \in \AQFT$, see~\cite{FA-vs-AQFT}. 
Let us mention that the deformation parameter $\hbar > 0$ 
will not be formal in our constructions below. 
Indeed, all expansions in powers of $\hbar$ that appear later on 
actually stop at finite order, see for instance the comment after \eqref{eqn:star-prod}. 
\sk

As a preparatory step, let us present a geometric construction 
that will be used frequently in the rest of the paper. 
\begin{lem}\label{lem:factorization}
Let $\und{f} = (f_1,\dots,f_n): \und{M}=(M_1,\dots,M_n) \to N$ 
be a time-ordered tuple in $\Loc_m$ of length $n \geq 2$. 
Then there exist $M \in \Loc_m$, $f: M \to N$ in $\Loc_m$ 
and a time-ordered tuple $\und{f}^\prime = (f^\prime_1,\dots,f^\prime_{n-1}): (M_1,\ldots,M_{n-1}) \to M$ 
in $\Loc_m$ of length $n-1$ such that 
$(f,f_n): (M,M_n) \to N$ is a time-ordered pair in $\Loc_m$ 
and $f \circ f_i^\prime = f_i$, for all $i=1,\ldots,n-1$. In short,
each time-ordered $n$-tuple $\und{f}: \und{M} \to N$, for $n\geq 2$, admits a factorization
\begin{flalign}
\xymatrix{
\ar[dr]_-{(\und{f}^\prime,\id_{M_n})}\und{M} \ar[rr]^-{\und{f}}&& N\\
&(M,M_n)\ar[ru]_-{(f,f_n)}&
}
\end{flalign}
with $\und{f}^\prime$ a time-ordered $(n-1)$-tuple and $(f,f_n)$ a time-ordered pair.
\end{lem}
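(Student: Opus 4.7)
The plan is to take
\begin{equation*}
M := J_N^{+\cap-}\Big(\bigcup_{i=1}^{n-1} f_i(M_i)\Big) \subseteq N
\end{equation*}
as the causally convex hull of the union of the images of the first $n-1$ morphisms in $\und{f}$, endowed with the restrictions of the metric, orientation and time-orientation of $N$. As noted after Definition~\ref{defi:globhyp}, the causally convex hull of an open subset is open and causally convex in $N$, so $M \in \Loc_m$. I would then define $f : M \to N$ to be the inclusion, which is manifestly a $\Loc_m$-morphism, and $f^\prime_i : M_i \to M$ for $i=1,\dots,n-1$ to be $f_i$ with codomain restricted to $M$, which is well-defined because $f_i(M_i) \subseteq M$ by construction. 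The equality $f \circ f^\prime_i = f_i$ is immediate, and each $f^\prime_i$ is a $\Loc_m$-morphism because its image $f_i(M_i)$ is open and causally convex in $N$, hence also open in $M$ (subspace topology) and causally convex in $M$ (by causal convexity of $M$ in $N$, any causal curve in $M$ between points of $f_i(M_i)$ is a causal curve in $N$, thus contained in $f_i(M_i)$).

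The key step is to verify the two time-ordering conditions from the hypothesis that $\und{f}$ is time-ordered, i.e.\ $J_N^+(f_i(M_i)) \cap f_j(M_j) = \emptyset$ for all $i<j$. For this I would rely throughout on the identity $J_M^+(S) = J_N^+(S) \cap M$, valid for any subset $S \subseteq M$ by the causal convexity of $M$ in $N$, together with the idempotency $J_N^+(J_N^+(S)) = J_N^+(S)$ following from transitivity of the causal relation. For the pair $(f,f_n)$, using that $M \subseteq J_N^+\big(\bigcup_{i<n} f_i(M_i)\big)$ by definition of the causally convex hull, one has
\begin{equation*}
J_N^+(f(M)) \cap f_n(M_n) \subseteq \bigcup_{i<n} \big(J_N^+(f_i(M_i)) \cap f_n(M_n)\big) = \emptyset \quad,
\end{equation*}
where the last equality is the time-ordering hypothesis for $\und{f}$. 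For the tuple $\und{f}^\prime$, for all $1 \leq i < j < n$ one computes
\begin{equation*}
J_M^+(f^\prime_i(M_i)) \cap f^\prime_j(M_j) = J_N^+(f_i(M_i)) \cap M \cap f_j(M_j) = J_N^+(f_i(M_i)) \cap f_j(M_j) = \emptyset \quad,
\end{equation*}
again by the time-ordering hypothesis for $\und{f}$.

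The main obstacle is really only to identify the right $M$: once this is done, all remaining checks are bookkeeping with causal relations. Naive alternatives such as $N \setminus J_N^-(f_n(M_n))$ or $N \setminus \overline{J_N^-(f_n(M_n))}$ either fail to be open in $N$ or raise nontrivial questions about causal convexity. The causally convex hull construction sidesteps both issues at once by directly producing an open, causally convex subset of $N$ that sits in the causal future of $\bigcup_{i<n} f_i(M_i)$, hence automatically outside the causal past of $f_n(M_n)$ by the time-ordering hypothesis.
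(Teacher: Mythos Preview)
Your proposal is correct and follows essentially the same approach as the paper: both take $M$ to be the causally convex hull $J_N^{+\cap-}\big(\bigcup_{i<n}f_i(M_i)\big)$, define $f$ as the inclusion and $f_i'$ as codomain restrictions, and then verify the time-ordering conditions. Your set-theoretic argument for $(f,f_n)$ being time-ordered (via $M\subseteq J_N^+\big(\bigcup_{i<n}f_i(M_i)\big)$ and idempotency of $J_N^+$) is a direct variant of the paper's contraposition argument, and you additionally spell out the verification that $\und{f}'$ is time-ordered in $M$ via $J_M^+(S)=J_N^+(S)\cap M$, a step the paper leaves implicit.
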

\begin{proof}
Recalling Subsection~\ref{subsec:Green}, we define the subset 
\begin{flalign}
M := J^{+\cap-}_N \left( \bigcup_{i=1}^{n-1} f_i(M_i) \right) \subseteq N 
\end{flalign}
as the causally convex hull of the union of the images of $f_i$, 
for $i=1,\ldots,n-1$. Since the images are open, $M \subseteq N$
is open and causally convex. 
Endowing it with the restriction of the orientation, time-orientation 
and metric of $N$ defines an object $M \in \Loc_m$ and promotes 
the subset inclusion $M \subseteq N$ to a morphism $f: M \to N$ in $\Loc_m$. 
Since, for each $i=1,\ldots,n-1$, $f_i(M_i) \subseteq M$ by construction, 
$f_i: M_i \to N$ in $\Loc_m$ factors as 
$f_i = f \circ f_i^\prime$, where $f_i^\prime: M_i \to M$ in $\Loc_m$
is the codomain restriction of $f_i$. To conclude, 
let us also check that $(f,f_n)$ is a time-ordered pair, 
i.e.\ $J^+_N(f(M)) \cap f_n(M_n) = \emptyset$. 
By contraposition, suppose that the intersection is not empty. 
Then there exists a future directed causal curve in $N$ 
emanating from $f(M)$ and reaching $f_n(M_n)$. 
Since any point in the causally convex hull $M$ is by definition 
in the causal future of $f_i(M_i)$, for some $i=1,\ldots,n-1$, 
it follows that there exists a future directed causal curve in $N$ 
emanating from $f_i(M_i)$, for some $i=1,\ldots,n-1$, 
and reaching $f_n(M_n)$, leading to a contradiction 
with the hypothesis that $\und{f}$ is time-ordered. 
\end{proof}

\subsection{\label{subsec:BV-quant}BV quantization}
Let $(F_M, Q_M, (-,-)_M, W_M)_{M \in \Loc_m}$ be a natural 
collection of free BV theories. 
Consider the symmetric algebra $\Sym(\FFF_\cc(M)[1]) \in \dgCAlg_\bbC$ 
generated by the complexification of $\FFF_\cc(M)[1] \in \Ch_\bbR$, 
whose differential $\Q$ is defined by the differential 
$Q_{[1]} = -Q$ of $\FFF_\cc(M)[1]$ and the graded Leibniz rule. 
BV quantization consists of deforming $\Q$ 
by means of the {\it BV Laplacian}
\begin{flalign}\label{eqn:BV-Laplacian}
\Delta_\BV := \Delta_{\tau_{(-1)}} \in \big[ \Sym(\FFF_\cc(M)[1]),\Sym(\FFF_\cc(M)[1]) \big]^1 \quad,
\end{flalign}
which is the Laplacian associated to the $(-1)$-shifted Poisson structure $\tau_{(-1)}$, 
see Definition~\ref{defi:Laplacian} and \eqref{eqn:shiftedPoiss}. 
Explicitly, one defines the degree increasing graded linear map 
\begin{flalign}
\Q_\hbar := \Q + \ii \hbar\, \Delta_\BV \in \big[ \Sym(\FFF_\cc(M)[1]),\Sym(\FFF_\cc(M)[1]) \big]^1 \quad,
\end{flalign}
where $\hbar > 0$ is Planck's constant and $\ii\in\bbC$ is the imaginary unit. 
Note that $\Q_\hbar$ defines a new differential since it squares to zero
\begin{flalign}
\Q_\hbar^2 = \Q^2 + \ii \hbar\, \partial \Delta_\BV - \hbar^2\, \Delta_\BV^2 = 0 \quad,
\end{flalign}
where we used $\Q^2 = 0$, $\partial \Delta_\BV = \Delta_{\partial \tau_{(-1)}} = 0$ 
and $\Delta_\BV^2 = - \Delta_\BV^2 = 0$, 
see \eqref{eqn:partial-Laplacian} and \eqref{eqn:LaplacianLaplacian}. 
We define the {\it cochain complex of quantum observables} 
\begin{flalign}\label{eqn:qBV-complex}
\F(M) := \big( \Sym(\FFF_\cc(M)[1]), \Q_\hbar \big) \in \Ch_\bbC 
\end{flalign}
by replacing the original differential $\Q$ with the quantized one $\Q_\hbar$. 
The assignment $\Loc_m \ni M \mapsto \F(M) \in \Ch_\bbC$ 
of the cochain complex of quantum observables can be promoted 
to a time-orderable prefactorization algebra $\F \in \tPFA$. 
For this purpose, we need to define time-ordered products 
that are compatible with the quantized differential $\Q_\hbar$. 
This is the goal of the next proposition. 
\begin{propo}\label{propo:t-ord-prod}
Let $\und{f}: \und{M} \to N$ be a time-orderable tuple in $\Loc_m$ 
of length $n$. Then the time-ordered product 
\begin{flalign}\label{eqn:BV-product}
\xymatrix@C=4em{
\F(\und{M}) \ar@{-->}[rr]^-{\F(\und{f})} \ar[rd]_-{\bigotimes_i f_{i\,\ast}} && \F(N) \\
& \F(N)^{\otimes n} \ar[ru]_-{\mu^{(n)}}
}\quad,
\end{flalign}
is a cochain map, i.e.\ 
$\Q_\hbar\, \F(\und{f}) = \F(\und{f})\, \Q_{\hbar\, \otimes}$. 
Here $f_{i\,\ast}$ denotes the symmetric algebra extension 
of the pushforward cochain map 
$f_{i\,\ast}: \FFF_\cc(M_i)[1] \to \FFF_\cc(N)[1]$
for compactly supported sections, 
see Subsection~\ref{subsec:properties}, and $\mu^{(n)}$ denotes the 
$n$-ary multiplication on the symmetric algebra $\Sym(\FFF_\cc(N)[1]) \in \dgCAlg_\bbC$. 
\end{propo}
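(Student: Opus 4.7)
The plan is to decompose $\Q_\hbar = \Q + \ii\hbar\,\Delta_\BV$ and verify the compatibility of $\F(\und{f})$ with each summand separately. For the $\Q$-part, compatibility is immediate: by naturality (Subsection~\ref{subsec:properties}) each pushforward $f_{i\,\ast}$ is a cochain map, and its symmetric algebra extension is a morphism in $\dgCAlg_\bbC$; meanwhile $\mu^{(n)}$ is a cochain map on $\Sym(\FFF_\cc(N)[1])\in\dgCAlg_\bbC$ because $\Q$ is a graded derivation of $\mu$.

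For the $\Delta_\BV$-part, the idea is to expand $\Delta_\BV$ applied to a product of pushforwards $f_{1\,\ast}(a_1)\cdots f_{n\,\ast}(a_n)$ by iterating the defining relation \eqref{eqn:Laplacian}, or equivalently by using the explicit formula \eqref{eqn:Laplacian-explicit}. This yields two families of contributions: ``diagonal'' terms in which $\Delta_\BV$ acts on a single factor $f_{i\,\ast}(a_i)$, and ``cross'' terms in which the bi-derivation $\pair{-,-}_{\tau_{(-1)}}$ contracts two distinct factors $f_{i\,\ast}(a_i)$ and $f_{j\,\ast}(a_j)$ with $i\neq j$. For the diagonal terms, naturality of the $(-1)$-shifted Poisson structure, $\tau_{(-1)}^N\circ(f_{i\,\ast}\otimes f_{i\,\ast})=\tau_{(-1)}^{M_i}$ from \eqref{eqn:nat-tau}, together with \eqref{eqn:f-Laplacian}, gives $\Delta_\BV\circ f_{i\,\ast}=f_{i\,\ast}\circ \Delta_\BV$ on symmetric algebras. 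Summed over $i$, these reproduce precisely $\F(\und{f})\circ \Delta_{\BV\,\otimes}$, as desired.

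The main obstacle is showing that the cross terms vanish, and this is where time-orderability is exploited. The key geometric observation is that, for a time-orderable tuple $\und{f}$, any two distinct indices $i\neq j$ can be placed in some order by a time-ordering permutation, so without loss of generality $J_N^+(f_i(M_i))\cap f_j(M_j)=\emptyset$; since $f_i(M_i)\subseteq J_N^+(f_i(M_i))$, this forces $f_i(M_i)\cap f_j(M_j)=\emptyset$. Because the $(-1)$-shifted fiber metric $(-,-)_N$ is fiber-wise, the induced integration pairing satisfies $\tau_{(-1)}(f_{i\,\ast}\psi_1\otimes f_{j\,\ast}\psi_2)=0$ on any two sections $\psi_1\in\FFF_\cc(M_i)[1]$, $\psi_2\in\FFF_\cc(M_j)[1]$. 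Iterated application of the graded derivation rule \eqref{eqn:bracket-der} in both slots of $\pair{-,-}_{\tau_{(-1)}}$ then extends this vanishing to arbitrary $a_i\in\Sym(\FFF_\cc(M_i)[1])$ and $a_j\in\Sym(\FFF_\cc(M_j)[1])$, so every cross term drops out.

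Combining these two pieces yields $\Delta_\BV\circ\F(\und{f})=\F(\und{f})\circ\Delta_{\BV\,\otimes}$, which together with the $\Q$-part gives $\Q_\hbar\,\F(\und{f})=\F(\und{f})\,\Q_{\hbar\,\otimes}$. As an alternative that avoids explicit sign bookkeeping in the $n$-fold Leibniz expansion, one can instead argue by induction on $n$ using Lemma~\ref{lem:factorization}: functoriality of the pushforwards plus the fact that $f_\ast$ is an algebra morphism rewrite $\F(\und{f})=\mu\circ(f_\ast\,\F(\und{f}^\prime)\otimes f_{n\,\ast})$, reducing the claim to the base case $n=2$ for a time-ordered pair, where only a single cross term arises and the above argument applies in its simplest form.
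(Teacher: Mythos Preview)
Your proof is correct and follows essentially the same approach as the paper: both reduce the question to compatibility with $\Delta_\BV$, use naturality of $\tau_{(-1)}$ for the diagonal terms, and exploit the fact that time-orderability forces pairwise disjoint images so that the cross contributions of $\pair{-,-}_{\tau_{(-1)}}$ vanish. The only difference is organizational: the paper runs the argument by induction on~$n$ via Lemma~\ref{lem:factorization} (which you mention as an alternative), whereas your primary presentation unrolls the $n$-fold Leibniz expansion directly; the substance is identical.
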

\begin{proof}
Since $\Q$ is natural and compatible $\Q\, \mu = \mu\, \Q_{\otimes}$ 
with the symmetric algebra multiplication $\mu$, 
one has $\Q\, \F(\und{f}) = \F(\und{f})\, \Q_{\otimes}$. 
Hence, it suffices to prove the analog 
$\Delta_\BV\, \F(\und{f}) = \F(\und{f})\, \Delta_{\BV\,{\otimes}}$ 
for the BV Laplacian $\Delta_\BV$. 
Furthermore, since the symmetric algebra multiplication $\mu$ 
is commutative, it suffices to prove the claim for $\und{f}$ time-ordered. 
We argue by induction on the length $n$. 
For $n=0$, the time-ordered product $\bbC \to \F(N)$ defined above 
assigns the unit $\mu^{(0)} = \oone$ of the symmetric algebra, 
hence the claim follows from $\Delta_\BV(\oone) = 0$, 
see Definition~\ref{defi:Laplacian}. 
For $n=1$, $\F(f) = f_\ast$, hence the claim follows 
because the BV Laplacian $\Delta_\BV$ inherits the naturality 
of the $(-1)$-shifted Poisson structure $\tau_{(-1)}$, 
see \eqref{eqn:f-Laplacian} and \eqref{eqn:nat-tau}. 
For $n=2$, one computes 
\begin{flalign}
\Delta_\BV\circ \F(f_1,f_2) &= \Delta_\BV\circ \mu\circ (f_{1\,\ast} \otimes f_{2\,\ast}) \nn \\ 
&= \mu\circ \left( \Delta_{\BV\, \otimes}\, + \pair{-,-}_{(-1)} \right)\circ (f_{1\,\ast} \otimes f_{2\,\ast}) \nn \\
&= \mu\circ (f_{1\,\ast} \otimes f_{2\,\ast})\circ \Delta_{\BV\, \otimes} \nn \\
&= \F(f_1,f_2)\circ \Delta_{\BV\,\otimes} \quad,
\end{flalign}
where in the first and last steps we used the definition 
of the time-ordered product $\F(f_1,f_2)$, 
in the second step we used the degree increasing graded endomorphism 
$\pair{-,-}_{(-1)} := \pair{-,-}_{\tau_{(-1)}}$ 
to spell out the modified Leibniz rule of $\Delta_\BV$, 
see Definitions~\ref{defi:endo} and \ref{defi:Laplacian}, and 
in the third step we used naturality of the BV Laplacian $\Delta_\BV$ 
and that $\pair{-,-}_{(-1)}$ vanishes on the image of 
$f_{1\,\ast} \otimes f_{2\,\ast}$, because
$f_1(M_1) \cap f_2(M_2) = \emptyset$ and $\tau_{(-1)}$ 
vanishes on sections with disjoint supports, 
see \eqref{eqn:shiftedPoiss}. 
For $n \geq 3$, taking $M \in \Loc_m$, $f: M \to N$ in $\Loc_m$ 
and a time-ordered tuple $\und{f}^\prime: (M_1,\ldots,M_{n-1}) \to M$ 
in $\Loc_m$ as provided by Lemma~\ref{lem:factorization}, one computes 
\begin{flalign}
\nn \F(\und{f}) &= \mu^{(n)}\circ \bigotimes_{i=1}^n f_{i\, \ast} 
= \mu\circ (f_\ast \otimes f_{n\, \ast}) \circ \left( \left( \mu^{(n-1)}\circ \bigotimes_{i=1}^{n-1} f_{i\, \ast}^\prime\right) \otimes \id \right) \\
&= \F(f,f_n)\circ (\F(\und{f}^\prime) \otimes \id) \quad, 
\end{flalign}
where in the first and last steps we used the definition 
of the time-ordered product $\F(\und{f})$ and in the second step 
we used $\mu^{(n)} = \mu\circ (\mu^{(n-1)} \otimes \id)$, 
$f \circ f^\prime_i = f_i$, for all $i=1,\ldots,n-1$, 
and the naturality of the symmetric algebra multiplication $\mu$. 
Hence, the claim for length $n \geq 3$ follows from lengths $2$ and $n-1$. 
\end{proof}

With these preparations, we define the
time-orderable prefactorization algebra $\F \in \tPFA$ 
by the data listed below:
\begin{enumerate}[label=(\alph*)]
\item For each $M \in \Loc_m$, the cochain complex 
$\F(M) \in \Ch_\bbC$ from \eqref{eqn:qBV-complex};

\item For each time-orderable tuple $\und{f} : \und{M} \to N$ in $\Loc_m$, 
the time-ordered product $\F(\und{f}): \F(\und{M}) \to \F(N)$ 
in $\Ch_\bbC$ from Proposition \ref{propo:t-ord-prod}. 
\end{enumerate}
Note that these data satisfy the axioms of Definition~\ref{defi:tPFA} 
because $\FFF_\cc[1]: \Loc_m \to \Ch_\bbR$ is a functor, 
see Subsection~\ref{subsec:properties}, and the symmetric algebra multiplication $\mu$ 
is associative, unital and commutative. 
The resulting time-orderable prefactorization algebra $\F \in \tPFA$ 
satisfies the time-slice axiom, as explained by the next proposition. 
\begin{propo}\label{propo:tPFA-time-slice}
If $f : M \to N$ in $\Loc_m$ is a Cauchy morphism, then 
$\F(f): \F(M) \to \F(N)$ in $\Ch_\bbC$ is a quasi-isomorphism.
\end{propo}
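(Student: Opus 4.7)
The plan is to reduce the claim to Theorem~\ref{th:time-slice-cl}(b) via a filtration argument on the symmetric algebra. I would equip each $\F(M) = \Sym(\FFF_\cc(M)[1])$ with the ascending filtration $F_p := \bigoplus_{n=0}^{p} \Sym^n(\FFF_\cc(M)[1])$ by polynomial degree. The original differential $\Q$ is a derivation and hence preserves each $F_p$, while the BV Laplacian $\Delta_\BV$ is a second-order operator satisfying $\Delta_\BV(\Sym^n) \subseteq \Sym^{n-2}$ by the explicit formula~\eqref{eqn:Laplacian-explicit}, and therefore strictly lowers the filtration degree. Consequently $\Q_\hbar = \Q + \ii\hbar\,\Delta_\BV$ preserves $F_p$, and the differential induced on the associated graded piece $\gr_p \cong \Sym^p(\FFF_\cc(M)[1])$ reduces to $\Q$ alone. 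Since $\F(f) = \Sym(f_\ast)$ is degree-preserving, it respects the filtration and induces on each $\gr_p$ the map $\Sym^p(f_\ast)$ between $(\Sym^p(\FFF_\cc(M)[1]), \Q)$ and $(\Sym^p(\FFF_\cc(N)[1]), \Q)$.

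By Theorem~\ref{th:time-slice-cl}(b), $f_\ast: \FFF_\cc(M)[1] \to \FFF_\cc(N)[1]$ is a quasi-isomorphism in $\Ch_\bbR$, hence also in $\Ch_\bbC$ after complexification. I would then invoke the standard fact that over a field of characteristic zero the functor $\Sym^p$ preserves quasi-isomorphisms: by Künneth, $f_\ast^{\otimes p}$ is a quasi-isomorphism; $\Sym^p$ is realized as the $\Sigma_p$-invariants of the $p$-fold tensor product via the averaging idempotent (a cochain map that commutes with $f_\ast^{\otimes p}$); and taking invariants under a finite group action is exact in characteristic zero. Thus each $\Sym^p(f_\ast)$ is a quasi-isomorphism.

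Next I would prove by induction on $p$ that the restriction $\F(f)|_{F_p}: F_p(M) \to F_p(N)$ is a quasi-isomorphism. The base $p=0$ is trivial as $F_0 \cong \bbC$ carries the zero differential. For the inductive step, I consider the short exact sequence of cochain complexes
\begin{equation*}
0 \longrightarrow F_{p-1} \longrightarrow F_p \longrightarrow \Sym^p(\FFF_\cc(\cdot)[1]) \longrightarrow 0
\end{equation*}
(with $\Q$ as the quotient differential), pass to the associated long exact sequence in cohomology and apply the five lemma to the comparison diagram between the sequences for $M$ and $N$: the inductive hypothesis gives that $\F(f)|_{F_{p-1}}$ is a quasi-isomorphism and the previous paragraph provides the same for the quotient map $\Sym^p(f_\ast)$. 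Finally, since $\F(M) = \colim_p F_p(M)$ is a filtered colimit of cochain complexes of $\bbC$-vector spaces and cohomology commutes with such filtered colimits, $\F(f)$ is a quasi-isomorphism on the total complex. Equivalently, one may package this argument via the bounded below exhaustive spectral sequence attached to the filtration, whose $E_0$ page is $(\Sym^p(\FFF_\cc(M)[1]), \Q)$, and invoke the Eilenberg--Moore comparison theorem.

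The only real obstacle is the characteristic-zero input that symmetric powers preserve quasi-isomorphisms; everything else is bookkeeping with the filtration and the functoriality of cohomology. No new analytic input beyond the classical time-slice quasi-isomorphism of Theorem~\ref{th:time-slice-cl}(b) is required, reflecting the fact that BV quantization deforms only the differential and the deformation strictly lowers the polynomial degree.
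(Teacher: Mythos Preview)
Your proof is correct and follows essentially the same filtration-by-symmetric-powers approach as the paper: both observe that $\Q_\hbar$ respects the polynomial-degree filtration because $\Delta_\BV$ lowers it, identify the associated graded with $(\Sym^p,\Q)$, and then conclude via Eilenberg--Moore (which you also mention explicitly). If anything, you are more detailed than the paper in justifying why $\Sym^p$ preserves quasi-isomorphisms in characteristic zero and in spelling out the elementary induction/colimit alternative to the spectral sequence comparison.
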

\begin{proof}
For any $L \in \Loc_m$, consider the filtration of
$\F(L) = \bigoplus_{n \geq 0} \Sym^n(\FFF_\cc(L)[1])\in\Ch_\bbC$ 
associated with symmetric powers. Explicitly, we denote 
the subcomplex of $\F(L)$ consisting of symmetric powers up to $p \geq 0$ by 
\begin{flalign}\label{eqn:filtration}
F_p(\F(L)) := \left( \bigoplus_{n = 0}^p \Sym^n(\FFF_\cc(L)[1]), \Q_\hbar \right) \subseteq \F(L) \quad.
\end{flalign}
(Note that this filtration is compatible with the quantized differential 
$\Q_\hbar = \Q + \ii \hbar\, \Delta_\BV$ because the original 
differential $\Q$ preserves the symmetric power and the BV Laplacian 
$\Delta_\BV$ lowers the symmetric power by $2$.)
The resulting filtration is bounded from below, i.e.\ $F_p(\F(L)) = 0$
vanishes, for all $p < 0$. The quotient maps 
$\F(L) \to \F(L)/F_p(\F(L))$ in $\Ch_\bbC$, for all $p \in \bbZ$, 
form a universal cone, i.e.\ $\F(L) \cong \lim_{p \in \bbZ} \F(L)/F_p(\F(L))$. 
This shows that the filtration is complete, see~\cite{Eilenberg-Moore}. 
Furthermore, for $p \geq 0$, the $p$-th component of the associated graded cochain complex
\begin{flalign}\label{eqn:associated-graded}
E^\circ_p(L) := F_p(\F(L))/F_{p-1}(\F(L)) \cong \Sym^p(\FFF_\cc(L)[1]) \in \Ch_\bbC 
\end{flalign}
is isomorphic to the $p$-th symmetric power of $\FFF_\cc(L)[1] \in \Ch_\bbC$ 
(endowed with the original differential $\Q$) because 
the BV Laplacian $\Delta_\BV$ lowers the symmetric power by $2$, 
see~\eqref{eqn:Laplacian-explicit}. Functoriality with respect 
to $L \in \Loc_m$ of the filtration~\eqref{eqn:filtration} and 
naturality of the isomorphism~\eqref{eqn:associated-graded} entail that, 
for all $f: M \to N$ in $\Loc_m$, the diagram 
\begin{flalign}
\xymatrix{
E^\circ_p(M) \ar[r]^-{E^\circ_p(f_\ast)} \ar[d]_-{\simeq} & E^\circ_p(N) \ar[d]^-{\simeq} \\
\Sym^p(\FFF_\cc(M)[1]) \ar[r]_-{f_\ast} & \Sym^p(\FFF_\cc(N)[1])
}
\end{flalign}
in $\Ch_\bbC$ commutes. Since the bottom cochain map is a 
quasi-isomorphism by Theorem~\ref{th:time-slice-cl}, 
the claim follows from \cite[Theorem~7.4]{Eilenberg-Moore}. 
\end{proof}

\begin{ex}\label{ex:tPFA}
Taking the natural free BV theories from Example \ref{ex:natBV} 
as inputs, the constructions and results from this subsection 
produce time-orderable prefactorization algebras 
satisfying the time-slice axiom that quantize ordinary field theories, 
linear Chern-Simons theory and Maxwell $p$-forms 
(including linear Yang-Mills theory for $p=1$). 
\end{ex}

\subsection{\label{subsec:star-prod}Moyal-Weyl star product}
Let $(F_M, Q_M, (-,-)_M, W_M)_{M \in \Loc_m}$ be a natural 
collection of free BV theories and consider again 
the symmetric algebra $\Sym(\FFF_\cc(M)[1]) \in \dgCAlg_\bbC$ 
generated by the complexification of $\FFF_\cc(M)[1] \in \Ch_\bbR$.
Canonical quantization can be realized by deforming the commutative multiplication $\mu$ 
of the symmetric algebra $\Sym(\FFF_\cc(M)[1]) \in \dgCAlg_\bbC$ to the {\it Moyal-Weyl star product}
\begin{flalign}\label{eqn:star-prod}
\xymatrix@C=4em{
\Sym(\FFF_\cc(M)[1])^{\otimes 2} \ar@{-->}[rr]^-{\mu_\hbar} \ar[dr]_-{\exp \left( \frac{\ii \hbar}{2}\, \pair{-,-}_{(0)} \right) ~~~~~~~~~~} && \Sym(\FFF_\cc(M)[1]) \\
& \Sym(\FFF_\cc(M)[1])^{\otimes 2} \ar[ur]_-{\mu}
}\quad,
\end{flalign}
where $\pair{-,-}_{(0)} := \pair{-,-}_{\tau_{(0)}}$ 
is the degree preserving graded endomorphism associated with 
the unshifted Poisson structure $\tau_{(0)}$, see 
Definition~\ref{defi:endo} and \eqref{eqn:Poiss}. Note that, for all 
polynomials $a,b \in \Sym(\FFF_\cc(M)[1])$, the exponential 
series defining $\mu_\hbar(a \otimes b)$ truncates to a finite sum. 
In particular, there is no need to regard $\hbar$ as a formal parameter. 
\begin{rem}
The Moyal-Weyl star product $\mu_\hbar$ is a non-commutative deformation 
of the commutative multiplication $\mu$ of the symmetric algebra 
$\Sym(\FFF_\cc(M)[1]) \in \dgCAlg_\bbC$ in the sense that the multiplications 
\begin{flalign}
\mu_\hbar = \mu + \mathcal{O}(\hbar)
\end{flalign} 
coincide up to terms of order at least $\hbar$ and, moreover, 
the $\mu_\hbar$-commutator 
\begin{flalign}
[-,-]_\hbar  = \ii \hbar\, \{-,-\}_{(0)} + \mathcal{O}(\hbar^2)
\end{flalign}
is proportional to the Poisson bracket 
$\{-,-\}_{(0)} := \mu \circ \pair{-,-}_{(0)}$, see Remark~\ref{rem:Poiss-bracket}, 
up to terms of order at least $\hbar^2$. 
\end{rem}

The Moyal-Weyl star product $\mu_\hbar$ is manifestly a degree 
preserving graded linear map. Furthermore, it is associative and unital 
with respect to $\oone \in \Sym(\FFF_\cc(M)[1])$ as a consequence 
of the properties of the degree preserving graded endomorphism 
$\pair{-,-}_{(0)} = \pair{-,-}_{\tau_{(0)}}$ and of the exponential. 
Let us also check that the Moyal-Weyl star product $\mu_\hbar$ is 
compatible with the differential $\Q$ of $\Sym(\FFF_\cc(M)[1])$, i.e.\
\begin{flalign}
\partial \mu_\hbar &= \mu\circ  \partial \exp \left( \frac{\ii \hbar}{2} \pair{-,-}_{(0)} \right) \nn \\
&= \mu\circ \Bigg(\sum_{n \geq 1} \frac{1}{n!} \left( \frac{\ii \hbar}{2} \right)^{n} \,\sum_{k=0}^{n-1} \pair{-,-}_{(0)}^k\circ \big(\partial \pair{-,-}_{(0)}\big)\circ \pair{-,-}_{(0)}^{n-1-k}\Bigg) \nn \\
&= 0\quad,
\end{flalign}
where in the first step we used the compatibility $\partial \mu = 0$ 
of the symmetric algebra multiplication $\mu$ with the differential $\Q$, 
in the second step we expanded the exponential series 
and applied the Leibniz rule for $\partial$ and in the last step 
we used that $\partial \pair{-,-}_{(0)} = \pair{-,-}_{\partial \tau_{(0)}} = 0$ vanishes, 
see \eqref{eqn:partial-endo} and recall that $\tau_{(0)}$ 
is a cochain map. 
Therefore, we define the quantized differential graded algebra 
\begin{flalign}
\A(M) := \big( \Sym(\FFF_\cc(M)[1]), \mu_\hbar, \oone \big) \in \dgAlg_\bbC\quad.
\end{flalign}
To promote the assignment $\Loc_m \ni M \mapsto \A(M) \in \dgAlg_\bbC$ 
of the quantized differential graded algebra to a functor, 
we check the naturality of the Moyal-Weyl star product $\mu_\hbar$ 
with respect to morphisms $f: M \to N$ in $\Loc_m$, i.e.\
\begin{flalign}
f_\ast \circ \mu_\hbar = \mu \circ (f_\ast \otimes f_\ast) \circ \exp \left( \frac{\ii \hbar}{2} \pair{-,-}_{(0)} \right) = \mu_\hbar \circ (f_\ast \otimes f_\ast) \quad,
\end{flalign}
where in the first step we used naturality of the 
symmetric algebra multiplication $\mu$ and in the second step 
we used the naturality of the unshifted Poisson structure $\tau_{(0)}$, 
see \eqref{eqn:nat-tau}, in combination with \eqref{eqn:f-endo} 
at all orders in $\hbar$. We are now ready to define the functor 
\begin{flalign}\label{eqn:AAA}
\A : \Loc_m \longrightarrow \dgAlg_\bbC
\end{flalign}
that assigns to any object $M \in \Loc_m$ 
the differential graded algebra $\A(M) \in \dgAlg_\bbC$ 
and to any morphism $f : M \to N$ in $\Loc_m$ the morphism
$\A(f): \A(M) \to \A(N)$ in $\dgAlg_\bbC$, whose underlying cochain map 
is the symmetric algebra extension of the pushforward cochain maps 
$f_\ast : \FFF_\cc(M)[1] \to \FFF_\cc(N)[1]$ in $\Ch_\bbR$. The next 
proposition shows that $\A$ is an AQFT.
\begin{propo}\label{propo:AQFT}
The functor $\A : \Loc_m \to \dgAlg_\bbC$ from \eqref{eqn:AAA} 
satisfies the Einstein causality and time-slice axioms of 
Definition~\ref{defi:AQFT}, hence $\A \in \AQFT$ is an AQFT.
\end{propo}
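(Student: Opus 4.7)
The plan is to verify the two axioms of Definition~\ref{defi:AQFT} separately, exploiting the structural results already established.

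For Einstein causality, let $f_1 : M_1 \to N \leftarrow M_2 : f_2$ be causally disjoint in $\Loc_m$. First I would compute the opposite multiplication on $\A(N)$. Since $\tau_{(0)}$ is anti-symmetric, the bi-derivation $\pair{-,-}_{(0)}$ is also anti-symmetric in the sense of \eqref{eqn:bracket-sym} with $s=-1$, and since $\mu$ is graded commutative, a short calculation yields $\mu_\hbar^{\op} = \mu \circ \exp(-\tfrac{\ii \hbar}{2} \pair{-,-}_{(0)})$. Hence Einstein causality reduces to showing that $\pair{-,-}_{(0)}$ vanishes on the image of $\A(f_1) \otimes \A(f_2) = \Sym f_{1\,\ast} \otimes \Sym f_{2\,\ast}$. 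On generators this follows from Definition~\ref{defi:endo}~(ii) combined with Theorem~\ref{th:time-slice-cl}~(a), which gives $\tau_{(0)}(f_{1\,\ast}\psi_1 \otimes f_{2\,\ast}\psi_2) = 0$ for all compactly supported sections $\psi_1, \psi_2$. Extending to the symmetric algebras requires a double induction on the symmetric power, using that $\pair{-,\oone}_{(0)} = 0 = \pair{\oone,-}_{(0)}$ (immediate from \eqref{eqn:bracket-der} applied to $\oone \cdot \oone = \oone$) and that \eqref{eqn:bracket-der} together with the symmetry~\eqref{eqn:bracket-sym} propagates the vanishing on generators to arbitrary monomials $f_{1\,\ast}(a) \otimes f_{2\,\ast}(b)$, invoking also that $\Sym f_{i\,\ast}$ is an algebra homomorphism. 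Once the bracket vanishes on the relevant image, both $\mu_\hbar$ and $\mu_\hbar^{\op}$ collapse to $\mu$ there, and graded commutativity of $\mu$ closes the argument.

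For the time-slice axiom, I would observe that the underlying cochain map of $\A(f) : \A(M) \to \A(N)$ is simply the symmetric algebra extension $\Sym f_\ast$ of the pushforward $f_\ast : \FFF_\cc(M)[1] \to \FFF_\cc(N)[1]$, since the Moyal-Weyl deformation only affects the multiplication while the differential remains $\Q$. By Theorem~\ref{th:time-slice-cl}~(b), $f_\ast$ is a quasi-isomorphism in $\Ch_\bbR$, hence also after complexification in $\Ch_\bbC$. Working over a field of characteristic zero, the Künneth theorem ensures that $f_\ast^{\otimes n}$ is a quasi-isomorphism for every $n \geq 0$, and the symmetrization projector $\tfrac{1}{n!}\sum_{\sigma \in \Sigma_n} \sigma$ exhibits $\Sym^n(f_\ast)$ as a retract of $f_\ast^{\otimes n}$ in $\Ch_\bbC$, so $\Sym^n(f_\ast)$ is a quasi-isomorphism too. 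Taking the direct sum over $n \geq 0$ concludes that $\Sym f_\ast$ is a quasi-isomorphism. (Alternatively, one could mirror the filtration argument of Proposition~\ref{propo:tPFA-time-slice}, which is in fact simpler here because $\Q$ preserves the symmetric powers strictly.)

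I expect the main obstacle to be the Einstein causality step, specifically the inductive extension of the vanishing of $\pair{-,-}_{(0)}$ from generators to the full symmetric algebras via the bi-derivation property. The time-slice axiom is essentially a formal consequence of Theorem~\ref{th:time-slice-cl}~(b) together with the characteristic zero hypothesis, and offers no substantive difficulty.
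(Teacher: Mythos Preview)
Your proposal is correct and follows essentially the same approach as the paper: for Einstein causality you show that $\pair{-,-}_{(0)}$ vanishes on the image of $f_{1\,\ast}\otimes f_{2\,\ast}$ (via Theorem~\ref{th:time-slice-cl}(a) and the bi-derivation axioms of Definition~\ref{defi:endo}) so that $\mu_\hbar$ and $\mu_\hbar^{\op}$ both collapse to the commutative $\mu$, and for time-slice you invoke Theorem~\ref{th:time-slice-cl}(b) and the fact that $\Sym$ preserves quasi-isomorphisms. Your explicit computation of $\mu_\hbar^{\op}$ via the anti-symmetry of $\pair{-,-}_{(0)}$ and your spelled-out K\"unneth/retract argument are more detailed than the paper's terse version, but the strategy is the same.
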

\begin{proof}
First, let us check the Einstein causality axiom. 
For causally disjoint morphisms $f_1 : M_1 \to N \gets M_2 : f_2$ in $\Loc_m$, 
Definition~\ref{defi:endo} applied to the unshifted Poisson structure 
$\tau_{(0)}$ and Theorem~\ref{th:time-slice-cl} entail that
\begin{flalign}
\pair{-,-}_{(0)} \circ \left( f_{1\, \ast} \otimes f_{2\, \ast} \right) = 0 
\end{flalign}
vanishes. Therefore, on the image of $ f_{1\, \ast} \otimes f_{2\, \ast}$ 
the Moyal-Weyl star product $\mu_\hbar$, see \eqref{eqn:star-prod}, coincides 
\begin{flalign}
\mu_\hbar \circ \left( f_{1\, \ast} \otimes f_{2\, \ast} \right) = \mu \circ \big( f_{1\,\ast} \otimes f_{2\,\ast} \big)
\end{flalign}
with the symmetric algebra multiplication $\mu$. Since the latter 
is commutative, the Einstein causality axiom follows.
\sk

Second, let us check the time-slice axiom. Given a Cauchy morphism 
$f : M \to N$ in $\Loc_m$, it suffices to show that the cochain map 
underlying $\A(f): \A(M) \to \A(N)$ in $\dgAlg_\bbC$ is a 
quasi-isomorphism. This is the case because 
the cochain map underlying $\A(f)$ is by definition the symmetric 
algebra extension of the pushforward cochain map 
$f_\ast : \FFF_\cc(M)[1] \to \FFF_\cc(N)[1]$ in $\Ch_\bbR$, 
which is a quasi-isomorphism by Theorem~\ref{th:time-slice-cl}. 
\end{proof}

\begin{ex}\label{ex:AQFT}
Taking the natural free BV theories from Example \ref{ex:natBV} 
as inputs, the constructions and results of this subsection 
produce AQFTs that quantize ordinary field theories, 
linear Chern-Simons theory and Maxwell $p$-forms 
(including linear Yang-Mills theory for $p=1$). 
In the case of the Klein-Gordon field and of Maxwell $p$-forms, earlier 
constructions of the same AQFTs can be found in \cite{linYM,hrep}. 
\end{ex}

\subsection{\label{subsec:comp}Comparison}
This subsection compares the two different quantization schemes from
Sections~\ref{subsec:BV-quant}~and~\ref{subsec:star-prod}. 
More specifically, we establish an isomorphism between the 
time-orderable prefactorization algebra $\F \in \tPFA$ 
constructed using the BV formalism and 
the time-orderable prefactorization algebra $\F_\A \in \tPFA$
canonically associated with the AQFT $\A \in \AQFT$. 
For this purpose, recall from \cite{FA-vs-AQFT} that 
$\F_\A \in \tPFA$ consists of the data listed below: 
\begin{enumerate}[label=(\alph*)]
\item For each $M \in \Loc_m$, the cochain complex $\F_\A(M) := \Sym(\FFF_\cc(M)[1]) \in \Ch_\bbC$ underlying $\A(M) \in \dgAlg_\bbC$;
\item For each time-orderable tuple $\und{f} : \und{M} \to N$ 
in $\Loc_m$, the time-ordered product 
\begin{flalign}\label{eqn:t-ord-prod}
\xymatrix@C=4em{
\F_\A(\und{M}) \ar@{-->}[rr]^-{\F_\A(\und{f})} \ar[rd]_-{\bigotimes_i f_{i\, \ast}} && \F_\A(N) \\
& \F_\A(N)^{\otimes n} \ar[ru]_-{\mu_\hbar^{(\rho)}}
}
\end{flalign}
in $\Ch_\bbC$, where $n$ denotes the length of the tuple $\und{f}$, 
$\rho$ is a time-ordering permutation for $\und{f}$ and
$\mu_\hbar^{(\rho)} := \mu_\hbar^{(n)} \circ \gamma_\rho$ denotes 
the $n$-ary Moyal-Weyl star product in the order prescribed by $\rho$. 
(The Einstein causality axiom of $\A$ ensures that 
$\F_\A(\und{f})$ does not depend on the choice of 
the time-ordering permutation $\rho$.) 
\end{enumerate}
The above data fulfill the axioms of Definition~\ref{defi:tPFA}, 
see \cite{FA-vs-AQFT} for more details. 
\sk

In preparation for our comparison result stated in Theorem~\ref{th:iso}, 
the next lemma explains how the time-orderable prefactorization algebra $\F_\A \in \tPFA$ 
captures the usual time-ordered products built out of the {\it Dirac multiplication} 
\begin{flalign}\label{eqn:Dirac-mult}
\xymatrix@C=4em{
\Sym(\FFF_\cc(M)[1])^{\otimes 2} \ar@{-->}[rr]^-{\mu_D} \ar[dr]_-{\exp \left( \ii \hbar\, \pair{-,-}_{D} \right) ~~~~~~~~~~} && \Sym(\FFF_\cc(M)[1]) \\
& \Sym(\FFF_\cc(M)[1])^{\otimes 2} \ar[ur]_-{\mu}
}\quad,
\end{flalign}
where $\pair{-,-}_{D} := \pair{-,-}_{\tau_D}$ denotes 
the degree preserving graded endomorphism associated with the natural 
Dirac pairing $\tau_D$ \eqref{eqn:pairDirac}. 
Note that the Dirac multiplication $\mu_D$ is associative, 
unital with respect to $\oone \in \Sym(\FFF_\cc(M)[1])$ 
and commutative because the Dirac pairing 
$\tau_D$ is symmetric; however, it is not compatible with 
the differential $\Q$ of $\F_\A(M) \in \Ch_\bbC$ because $\partial \tau_D = \tau_{(-1)}$ does 
not vanish, see \eqref{eqn:partial-endo} and \eqref{eqn:partial-tauD}. 
Furthermore, the naturality of $\tau_D$ and that of the symmetric algebra 
multiplication $\mu$ entail that the Dirac multiplication $\mu_D$ is natural too. 
\begin{lem}\label{lem:t-ord-prod-alt}
Let $\und{f}: \und{M} \to N$ be a time-orderable tuple in $\Loc_m$ 
of length $n$. Then the time-ordered product $\F_\A(\und{f})$ 
can be computed using the Dirac multiplication $\mu_D$, i.e.\
\begin{flalign}
\xymatrix@C=4em{
\F_\A(\und{M}) \ar[rr]^-{\F_\A(\und{f})} \ar[dr]_{\bigotimes_i f_{i\, \ast}} && \F_\A(N) \\
& \F_\A(N)^{\otimes n} \ar[ur]_-{\mu_D^{(n)}} &
}\quad.
\end{flalign}
\end{lem}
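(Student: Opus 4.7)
The plan is to prove the identity by induction on the length $n$, reducing first to the time-ordered case and then invoking Proposition~\ref{propo:time-ord} at $n=2$ together with the composition axiom of $\F_\A$ for higher $n$. First I would reduce to the case where $\und{f}$ itself is time-ordered: since $\tau_D$ is symmetric, the bi-derivation $\pair{-,-}_D$ is symmetric and hence $\mu_D^{(n)}$ is invariant under precomposition by $\gamma_\rho$ for any $\rho \in \Sigma_n$; combining this with naturality of the braiding (which converts $\gamma_\rho\circ \bigotimes_i f_{i\,\ast}$ into $\bigotimes_i f_{\rho(i)\,\ast}\circ \gamma_\rho$) and the definition $\mu_\hbar^{(\rho)} = \mu_\hbar^{(n)}\circ \gamma_\rho$, the desired identity for $\und{f}$ reduces to the same identity for the time-ordered permuted tuple. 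The base cases $n=0,1$ are immediate from the conventions $\mu_D^{(0)} = \oone$ and $\mu_D^{(1)} = \id$.

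For the key case $n=2$ I would use Proposition~\ref{propo:time-ord}, which gives $\tau_D\circ (f_{1\,\ast} \otimes f_{2\,\ast}) = \tfrac{1}{2}\,\tau_{(0)}\circ (f_{1\,\ast} \otimes f_{2\,\ast})$, to show that $\pair{-,-}_D$ and $\tfrac{1}{2}\pair{-,-}_{(0)}$ coincide on the image of $f_{1\,\ast}\otimes f_{2\,\ast}$. Indeed, $\pair{-,-}_\tau$ is a bi-derivation whose value on a pair of linear generators equals $\tau$ of that pair times $\oone \otimes \oone$; hence applying it once to an element of the image of $f_{1\,\ast}\otimes f_{2\,\ast}$ yields a sum of contractions in which exactly one ``cross pairing'' through $\tau$ occurs, while the remaining factors still lie in the images of the respective pushforwards (use that $f_{i\,\ast}$ is an algebra map). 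An induction on the number of iterations would then show both that the image of $f_{1\,\ast}\otimes f_{2\,\ast}$ is stable under these bi-derivations and that $\pair{-,-}_{(0)}^k = 2^k\pair{-,-}_D^k$ on it for all $k\ge 0$. Exponentiating and precomposing with $\mu$ then gives $\mu_D\circ (f_{1\,\ast}\otimes f_{2\,\ast}) = \mu_\hbar\circ (f_{1\,\ast}\otimes f_{2\,\ast}) = \F_\A(f_1,f_2)$.

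For $n\ge 3$ I would apply Lemma~\ref{lem:factorization} to factor $\und{f}$ through a time-ordered pair $(f,f_n)\colon (M,M_n) \to N$ and a time-ordered $(n-1)$-tuple $\und{f}^\prime \colon (M_1,\ldots,M_{n-1}) \to M$. The composition axiom for $\F_\A$ gives $\F_\A(\und{f}) = \F_\A(f,f_n)\circ (\F_\A(\und{f}^\prime)\otimes \id)$; the $n=2$ case applied to $(f,f_n)$ and the induction hypothesis applied to $\und{f}^\prime$ then rewrite both factors in terms of $\mu_D$, after which $f\circ f^\prime_i = f_i$ for $i < n$, naturality of $\mu_D$ (a consequence of the naturality of $\tau_D$ and of $\mu$) and associativity of $\mu_D$ assemble the result into $\mu_D^{(n)}\circ \bigotimes_i f_{i\,\ast}$. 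The hard part is the $n=2$ step: one must verify that iterating $\pair{-,-}_{(0)}$ and $\pair{-,-}_D$ on the image of $f_{1\,\ast}\otimes f_{2\,\ast}$ preserves the class on which these bi-derivations agree up to the factor $2$, which is the only point at which Proposition~\ref{propo:time-ord}, and hence the support properties of retarded and advanced Green's operators, truly enters the proof.
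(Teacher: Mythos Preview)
Your proposal is correct and follows essentially the same approach as the paper: reduce to time-ordered tuples by commutativity of $\mu_D$, handle $n=0,1$ trivially, use Proposition~\ref{propo:time-ord} at $n=2$ to identify the iterated bi-derivations $\pair{-,-}_D^k$ and $(\tfrac{1}{2}\pair{-,-}_{(0)})^k$ on the image of $f_{1\,\ast}\otimes f_{2\,\ast}$, and then induct via Lemma~\ref{lem:factorization} for $n\geq 3$. The paper simply asserts the identity $\pair{-,-}_D^k\circ(f_{1\,\ast}\otimes f_{2\,\ast}) = (\tfrac{1}{2}\pair{-,-}_{(0)})^k\circ(f_{1\,\ast}\otimes f_{2\,\ast})$ as an immediate consequence of Proposition~\ref{propo:time-ord}, whereas you spell out the stability of the image under the bi-derivations that is implicitly needed to iterate; this is a welcome clarification rather than a difference in strategy.
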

\begin{proof}
Since the Dirac multiplication $\mu_D$ is commutative 
and $\F_\A \in \tPFA$ is a time-orderable factorization algebra 
(hence its time-ordered products are equivariant with respect to 
permutations, see Definition~\ref{defi:tPFA}), 
it suffices to check the claim for $\und{f}$ time-ordered. 
We argue by induction on the length $n$. For $n=0$ and $n=1$, the claim holds 
because $\mu_D^{(0)} = \oone = \mu_\hbar^{(0)}$ and $\mu_D^{(1)} = \id = \mu_\hbar^{(1)}$. 
For $n=2$, Proposition~\ref{propo:time-ord} entails that, 
for all $k \geq 1$, 
\begin{flalign}\label{eqn:pairD=pair0}
\pair{-,-}_D^k \circ (f_{1\, \ast} \otimes f_{2\, \ast}) = \left( \frac{1}{2} \pair{-,-}_{(0)} \right)^k \circ (f_{1\, \ast} \otimes f_{2\, \ast}) \quad. 
\end{flalign}
Then one computes 
\begin{flalign}
\mu_D \circ (f_{1\, \ast} \otimes f_{2\, \ast}) &= \mu \circ \exp \left( \ii \hbar\, \pair{-,-}_D \right) \circ (f_{1\, \ast} \otimes f_{2\, \ast}) \nn \\ 
&= \mu \circ \exp \left( \frac{\ii \hbar}{2}\, \pair{-,-}_{(0)} \right) \circ (f_{1\, \ast} \otimes f_{2\, \ast}) \nn \\
&= \F_\A(f_1,f_2) \quad,
\end{flalign}
where in the first step we used the definition of the Dirac 
multiplication $\mu_D$, see \eqref{eqn:Dirac-mult}, 
in the second step we used \eqref{eqn:pairD=pair0} 
and in the last step we used the definition of the time-ordered product 
$\F_\A(f_1,f_2)$, see \eqref{eqn:t-ord-prod}. 
For $n \geq 3$, taking $M \in \Loc_m$, $f: M \to N$ in $\Loc_m$ 
and a time-ordered tuple $\und{f}^\prime: (M_1,\ldots,M_{n-1}) \to M$ 
in $\Loc_m$ as provided by Lemma~\ref{lem:factorization}, one computes 
\begin{flalign}
\mu_D^{(n)} \circ \bigotimes_{i=1}^n f_{i\, \ast} &= \mu_D \circ (f_\ast \otimes f_{n\, \ast}) \circ \left( \left( \mu_D^{(n-1)} \circ \bigotimes_{i=1}^{n-1} f_{i\, \ast}^\prime \right) \otimes \id \right) \quad, 
\end{flalign}
where we used $\mu_D^{(n)} = \mu_D\circ (\mu_D^{(n-1)} \otimes \id)$, 
$f \circ f^\prime_i = f_i$, for all $i=1,\ldots,n-1$, 
and the naturality of the Dirac multiplication $\mu_D$. 
Hence, the claim for length $n \geq 3$ follows from lengths $2$ and $n-1$. 
\end{proof}

The alternative description from Lemma~\ref{lem:t-ord-prod-alt} 
of the time-ordered products of $\F_\A \in \tPFA$ plays a key role 
in the proof of our main result. 
\begin{theo}\label{th:iso}
The time-orderable prefactorization algebra $\F \in \tPFA$ 
(constructed via the BV formalism in Subsection~\ref{subsec:BV-quant}) 
and the time-orderable prefactorization algebra $\F_\A \in \tPFA$ 
associated with the AQFT $\A \in \AQFT$ 
(constructed via the Moyal-Weyl star product 
in Subsection~\ref{subsec:star-prod}) are isomorphic. 
Explicitly, the {\it time-ordering map}
\begin{flalign}
T:= \exp(\ii  \hbar\, \Delta_D) : \F \overset{\cong}{\longrightarrow} \F_\A
\end{flalign}
in $\tPFA$ is an isomorphism. Here $\Delta_D := \Delta_{\tau_D}$, called Dirac Laplacian, 
is the Laplacian associated with the Dirac pairing $\tau_D$, 
see Definition~\ref{defi:Laplacian} and \eqref{eqn:pairDirac}. 
\end{theo}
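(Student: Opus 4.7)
The plan is to verify three things about the map $T = \exp(\ii\hbar\Delta_D)$, which together establish that it is an isomorphism in $\tPFA$: (a) at each $M \in \Loc_m$, $T_M$ is a cochain isomorphism from $\F(M)$ to $\F_\A(M)$; (b) $T$ is natural with respect to $\Loc_m$-morphisms; (c) $T$ intertwines the time-ordered products of $\F$ and $\F_\A$. That $T$ is well-defined at the level of graded vector spaces follows from \eqref{eqn:Laplacian-explicit}: $\Delta_D$ lowers the symmetric degree by $2$, so on any element of $\Sym(\FFF_\cc(M)[1])$ the exponential series truncates to a finite sum. Invertibility is then immediate: $\Delta_D$ trivially commutes with itself, so $\exp(-\ii\hbar\Delta_D)$ is a (pointwise well-defined) two-sided inverse.

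For (a), the key ingredients are already in the paper. From \eqref{eqn:partial-tauD} and \eqref{eqn:partial-Laplacian} one has $\partial \Delta_D = \Delta_{\partial \tau_D} = \Delta_{\tau_{(-1)}} = \Delta_\BV$, and from \eqref{eqn:LaplacianLaplacian} (applied with parities $0$ and $1$) $\Delta_D$ and $\Delta_\BV$ commute. Expanding $\partial T$ term by term and telescoping using $\partial$ as a graded derivation of composition yields $\partial T = \ii\hbar\,\Delta_\BV\, T = T\, \ii\hbar\,\Delta_\BV$, which is precisely the identity $\Q\, T = T\,\Q_\hbar$. Claim (b) is equally direct: by \eqref{eqn:f-Laplacian} applied to the natural Dirac pairing $\tau_D$ (whose naturality is recorded in \eqref{eqn:nat-tau}), $f_\ast \circ \Delta_D^M = \Delta_D^N \circ f_\ast$ for every $f: M \to N$ in $\Loc_m$, hence $f_\ast \circ T_M = T_N \circ f_\ast$.

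The substantive work lies in (c), and here the strategy is to use Lemma~\ref{lem:t-ord-prod-alt} to replace the Moyal-Weyl presentation of $\F_\A(\und{f})$ with the Dirac multiplication $\mu_D$, reducing the compatibility with time-ordered products to the purely algebraic identity
\begin{flalign*}
T \circ \mu^{(n)} = \mu_D^{(n)} \circ T^{\otimes n}
\end{flalign*}
on $\Sym(\FFF_\cc(M)[1])^{\otimes n}$, combined with naturality of $T$ from (b). The binary case $n=2$ is the crux: since $\tau_D$ has even degree, \eqref{eqn:Deltan-mu} gives $\Delta_D^n \circ \mu = \mu \circ (\Delta_{D\otimes} + \pair{-,-}_D)^n$ for all $n$; summing against $(\ii\hbar)^n/n!$, and using that $\Delta_{D\otimes} := \Delta_D \otimes \id + \id \otimes \Delta_D$ commutes with $\pair{-,-}_D$, produces
\begin{flalign*}
T \circ \mu \,=\, \mu \circ \exp\!\big(\ii\hbar\,\pair{-,-}_D\big) \circ \exp\!\big(\ii\hbar\,\Delta_{D\otimes}\big) \,=\, \mu_D \circ (T \otimes T) \, ,
\end{flalign*}
where the last equality uses $\exp(\ii\hbar\,\Delta_{D\otimes}) = T \otimes T$ and the definition \eqref{eqn:Dirac-mult} of $\mu_D$. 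The $n$-ary identity then follows by induction on $n$, exploiting associativity of both $\mu$ and $\mu_D$. The main obstacle is precisely this combinatorial identity for the exponential of a second-order operator acting on a product; however \eqref{eqn:Deltan-mu} from Subsection~\ref{subsec:pairing} is tailored to this computation, and Lemma~\ref{lem:t-ord-prod-alt} provides the conceptual simplification of turning time-ordered Moyal-Weyl products into iterated Dirac multiplications so that the argument never has to handle non-commutativity head-on.
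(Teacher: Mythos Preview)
Your proof is correct and follows essentially the same strategy as the paper: both establish $\Q\circ T = T\circ \Q_\hbar$ from $\partial\Delta_D = \Delta_\BV$ and the commutation \eqref{eqn:LaplacianLaplacian}, both invoke \eqref{eqn:Deltan-mu} to obtain $T\circ\mu = \mu_D\circ(T\otimes T)$, and both rely on Lemma~\ref{lem:t-ord-prod-alt} to rewrite the time-ordered products of $\F_\A$ via $\mu_D$. The one organizational difference is in the inductive step for tuples of length $n\geq 3$: the paper invokes the geometric factorization Lemma~\ref{lem:factorization} to reduce an $n$-tuple to a pair and an $(n-1)$-tuple, whereas you prove the purely algebraic identity $T\circ\mu^{(n)} = \mu_D^{(n)}\circ T^{\otimes n}$ on $\Sym(\FFF_\cc(N)[1])$ by induction using associativity of $\mu$ and $\mu_D$, and then combine with naturality of $T$. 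Your route is slightly cleaner here, since Lemma~\ref{lem:factorization} is not needed directly (it still enters through Lemma~\ref{lem:t-ord-prod-alt}); the paper's route has the virtue of keeping all inductions at the level of time-orderable tuples.
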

\begin{proof}
Suppose that $T$ as defined above is a morphism of time-orderable
prefactorization algebras. Then it is also an isomorphism with inverse 
$T^{-1} := \exp(-\ii \hbar\, \Delta_D) : \F_\A \to \F$ in $\tPFA$. 
Therefore, it suffices to check that $T$ is a morphism of time-orderable 
prefactorization algebras. We split this check in two parts. 
First, we show the compatibility with differentials, i.e.\ that, 
for all $M \in \Loc_m$, the $M$-component 
$T_M: \F(M) \to \F_\A(M)$ is a cochain map. 
Second, we show the compatibility with time-ordered products, 
i.e.\ that, for all time-orderable tuples $\und{f}: \und{M} \to N$ 
in $\Loc_m$, $T_N \circ \F(\und{f}) = \F_\A(\und{f}) \circ T_{\und{M}}$.

\paragraph{Compatibility with differentials:}
Recall the BV and Dirac Laplacians  
$\Delta_\BV := \Delta_{\tau_{(-1)}}\in [\Sym(\FFF_\cc(M)[1]),\Sym(\FFF_\cc(M)[1])]^1$ and 
$\Delta_D := \Delta_{\tau_D} \in [\Sym(\FFF_\cc(M)[1]),\Sym(\FFF_\cc(M)[1])]^0$, 
see Definition~\ref{defi:Laplacian}, \eqref{eqn:shiftedPoiss} 
and \eqref{eqn:pairDirac}. From \eqref{eqn:partial-Laplacian}, 
\eqref{eqn:LaplacianLaplacian} and \eqref{eqn:partial-tauD} it follows that 
\begin{flalign}\label{eqn:BV-D-Laplacians}
\partial \Delta_D = \Delta_{\partial \tau_{D}} = \Delta_{\BV} \quad, \qquad \Delta_\BV\circ \Delta_D = \Delta_D\circ \Delta_\BV \quad.
\end{flalign}
Therefore, regarding $T_M$ as a $0$-cochain in 
$[\Sym(\FFF_\cc(M)[1]),\Sym(\FFF_\cc(M)[1])] \in \Ch_\bbC$, one computes 
\begin{flalign}\label{eqn:partialT}
\Q\circ T_M - T_M\circ \Q = \partial T_M = \sum_{n \geq 1} \frac{1}{n!}\, \partial \big( (\ii \hbar\, \Delta_D)^n \big)
= T_M\circ (\ii \hbar\, \Delta_\BV)\quad.
\end{flalign}
In the first step we used the definition of $\partial$. 
In the second step we expanded the exponential that defines $T_M$ 
(recall that the series evaluated on any $a \in \Sym(\FFF_\cc(M)[1])$ 
truncates to a finite sum) and used that $\partial$ is linear and 
vanishes on $\id$. In the last step we used the Leibniz rule for $\partial$ 
with respect to composition, \eqref{eqn:BV-D-Laplacians} 
and the definition of $T_M$. Equation~\eqref{eqn:partialT} means that 
$\Q\circ T_M = T_M\circ \Q_\hbar$, which shows that $T_M: \F(M) \to \F_\A(M)$ 
is a cochain map.

\paragraph{Compatibility with time-ordered products:}
Since $\F,\F_\A \in \tPFA$ are time-orderable prefactorization algebras, 
their time-ordered products are equivariant with respect 
to permutations, see Definition~\ref{defi:tPFA}. 
Hence, it suffices to show the compatibility of $T$ 
with time-ordered products for $\und{f}: \und{M} \to N$ time-ordered. 
We argue by induction on the length $n$ of $\und{f}$. 
For $n=0$ this is trivial.
For $n=1$, the claim follows from naturality of the Dirac Laplacian 
$\Delta_D \circ f_\ast = f_\ast \circ \Delta_D$, 
see \eqref{eqn:f-Laplacian} and \eqref{eqn:nat-tau}. 
For $n=2$, one computes
\begin{flalign}
T_N \circ \F(f_1,f_2) &= \mu_D \circ (T_N \otimes T_N) \circ \big( \F(f_1) \otimes \F(f_2) \big) = \F_\A(f_1,f_2) \circ (T_{M_1} \otimes T_{M_2}) \quad,
\end{flalign}
where in the first step we used
$T_N \circ \mu = \mu_D \circ (T_N \otimes T_N)$,
which follows from \eqref{eqn:Deltan-mu}, and  
in the last step we used the claim for $n=1$ 
and Lemma~\ref{lem:t-ord-prod-alt}. 
For $n \geq 3$, taking $M \in \Loc_m$, $f: M \to N$ in $\Loc_m$ 
and a time-ordered tuple $\und{f}^\prime: (M_1,\ldots,M_{n-1}) \to M$ 
in $\Loc_m$ as provided by Lemma~\ref{lem:factorization}, one computes 
\begin{flalign}
T_N \circ \F(\und{f}) = T_N \circ \F(f,f_n) \circ \big( \F(\und{f}^\prime) \otimes \id \big) = \F_\A(f,f_n) \circ \big( \F_\A(\und{f}^\prime) \otimes \id \big) \circ T_{\und{M}} = \F_\A(\und{f}) \circ T_{\und{M}} \quad,
\end{flalign}
where in the first and last steps we used the composition and identity 
axioms of time-orderable prefactorization algebras 
and $f \circ f_i^\prime = f_i$, for all $i=1,\ldots,n-1$, and 
in the second step we used the claim for lengths $2$ and $n-1$.
\end{proof}

\begin{ex}
The abstract comparison result established in 
Theorem~\ref{th:iso}, when specialized to the time-orderable 
prefactorization algebras from Example~\ref{ex:tPFA} 
and the corresponding AQFTs from Example~\ref{ex:AQFT}, 
provides concrete comparison results between
the time-orderable prefactorization algebras 
and the AQFTs quantizing ordinary field theories, 
linear Chern-Simons theory and Maxwell $p$-forms 
(including linear Yang-Mills theory for $p=1$). 
Our result generalizes the earlier comparison result in \cite{GwilliamRejzner},
which is formulated only for ordinary field theories, to the case of 
gauge and also higher gauge theories.
\end{ex}

%%%%%%%%%%%%%%%%%%%%%%%%%%%%%%%%%%%%%%%%%%%%%%%%
%%%%%%%%%%%%%%%%%%%%%%%%%%%%%%%%%%%%%%%%%%%%%%%%

\section*{Acknowledgments}
The work of M.B.\ and G.M.\ is supported in part by the MIUR Excellence 
Department Project awarded to Dipartimento di Matematica, 
Università di Genova (CUP D33C23001110001) and it is fostered by 
the National Group of Mathematical Physics (GNFM-INdAM (IT)). 
G.M.\ is supported by a PhD scholarship of the University of Genova (IT).
A.S.\ gratefully acknowledges the support of 
the Royal Society (UK) through a Royal Society University 
Research Fellowship (URF\textbackslash R\textbackslash 211015)
and Enhancement Awards (RGF\textbackslash EA\textbackslash 180270, 
RGF\textbackslash EA\textbackslash 201051 and RF\textbackslash ERE\textbackslash 210053).

\section*{Conflict of interest statement}
On behalf of all authors, the corresponding author states that there is no conflict of interest. 

\section*{Data availability statement}
All data generated or analyzed during this study are contained in this document.

%%%%%%%%%%%%%%%%%%%%%%%%%%%%%%%%%%%%%%%%%%%%%%%%
%%%%%%%%%%%%%%%%%%%%%%%%%%%%%%%%%%%%%%%%%%%%%%%%

\appendix

\section{\label{app:natural}Natural geometric structures}
This appendix provides an explicit description of the constituents 
of the natural collections of free BV theories 
from Definition \ref{defi:natBV}, namely natural vector bundles, 
natural fiber metrics and natural linear differential operators. 
This appendix also outlines the relevant constructions that lead 
to the key facts (1--3) from Subsection \ref{subsec:properties}. 
\sk

Let us consider the category $\VBun_\bbR$, whose objects are pairs $(M,E)$ 
consisting of $M \in \Loc_m$ and a finite rank real vector bundle 
$E \to M$ and whose morphisms are pairs 
$(f,\bar{f}): (M_1,E_1) \to (M_2,E_2)$ consisting of a morphism 
$f: M_1 \to M_2$ in $\Loc_m$ and a vector bundle map $\bar{f}: E_1 \to E_2$ 
over $f$ that acts as an isomorphism on the fibers, namely such that, 
for all $p \in M$, the linear map $\bar{f}_p: E_{1\,p} \to E_{2\, f(p)}$ 
is an isomorphism. 
Let us consider the evident functor 
$\pi: \VBun_\bbR \to \Loc_m$, $(M,E) \mapsto M$. 
\sk

A {\it natural vector bundle} $\mathsf{E}$ is a section of $\pi$, 
i.e.\ a functor $\mathsf{E}: \Loc_m \to \VBun_\bbR$ 
such that $\pi \circ \mathsf{E} = \id$. This means that
$\mathsf{E}$ sends $M \in \Loc_m$ to an object of the form 
$\mathsf{E}(M) = (M,E_M) \in \VBun_\bbR$ and $f: M_1 \to M_2$ 
to a morphism of the form 
$\mathsf{E}(f) = (f,E_f): \mathsf{E}(M_1) \to \mathsf{E}(M_2)$ in $\VBun_\bbR$. 
Given a natural vector bundle $\mathsf{E}$ and a morphism $f:M_1 \to M_2$ 
in $\Loc_m$, one constructs the pullback and pushforward linear maps 
\begin{flalign}\label{eqn:pull-push}
f^\ast: \Gamma(E_{M_2}) \longrightarrow \Gamma(E_{M_1}) \quad, \qquad f_\ast: \Gamma_\cc(E_{M_1}) \longrightarrow \Gamma_\cc(E_{M_2})
\end{flalign} 
for sections and sections with compact support, respectively. 
Explicitly, given $\varphi \in \Gamma(E_{M_2})$, 
$f^\ast \varphi \in \Gamma(E_{M_1})$ is defined by 
$f^\ast \varphi := E_f^{-1} \circ \varphi \circ f$, where we note that 
$E_f^{-1}$ can be inverted as only fibers over $f(M_1)$ are involved. 
Furthermore, given $\psi \in \Gamma_\cc(E_{M_1})$, 
$f_\ast \psi \in \Gamma_\cc(E_{M_2})$ is defined as the extension 
by zero along the open embedding $f(M_1) \subseteq M_2$ 
of the compactly supported section 
$E_f \circ \psi \circ f^{-1}: f(M_1) \to E_{M_2}$. 
Note that $f^\ast$ and $f_\ast$ upgrade the assignments 
$M \mapsto \Gamma(E_M)$ and $M \mapsto \Gamma_\cc(E_M)$ to functors 
\begin{flalign}
\Gamma(\mathsf{E}): \Loc_m^\op \longrightarrow \Vec_\bbR\quad, \qquad \Gamma_\cc(\mathsf{E}): \Loc_m \longrightarrow \Vec_\bbR
\end{flalign}
with values in the category of real vector spaces and linear maps. 
By constructions of $f^\ast$ and $f_\ast$, it follows that 
\begin{flalign}\label{eqn:pullpush=id}
f^\ast\, f_\ast\, \psi = \psi \quad, \qquad f_\ast\, f^\ast\, \varphi = \varphi \quad, 
\end{flalign}
for all $\psi \in \Gamma_\cc(E_{M_1})$ and 
all $\varphi \in \Gamma_\cc(E_{M_2})$ with $\supp(\varphi) \subseteq f(M_1)$. 
\sk

A {\it natural fiber metric} $\langle-,-\rangle$ on a natural vector bundle 
$\mathsf{E}$ is a natural transformation 
$\langle-,-\rangle: \mathsf{E} \otimes \mathsf{E} \to \underline{\bbR}$ 
to the natural trivial line bundle $\underline{\bbR}: \Loc_m\to \VBun_\bbR\,,~M\mapsto M\times\bbR$,
whose components $\langle-,-\rangle_M: E_M \otimes E_M \to M\times \bbR$ 
are fiber metrics, for all $M \in \Loc_m$. 
Given $M \in \Loc_m$, one defines the integration pairing 
\begin{subequations}
\begin{flalign}
\langle\!\langle -, - \rangle\!\rangle_M: \Gamma_\cc(E_M) \otimes \Gamma(E_M) \longrightarrow \bbR 
\end{flalign}
by
\begin{flalign}
\langle\!\langle \psi, \varphi \rangle\!\rangle_M := \int_M{\langle \psi, \varphi \rangle_M\,\vol_M} \quad,
\end{flalign}
\end{subequations}
for all $\psi \in \Gamma_\cc(E_M)$ and $\varphi \in \Gamma(E_M)$, 
see also~\eqref{eqn:fibintpairing}. Because $\langle-,-\rangle$ 
is a natural fiber metric, it follows that the integration pairing 
$\langle\!\langle -, - \rangle\!\rangle$ is natural in the following sense: 
for all $f: M_1 \to M_2$ in $\Loc_m$, the diagram 
\begin{flalign}\label{eqn:nat-int-pair}
\xymatrix@C=3.5em{
\Gamma_\cc(E_{M_1}) \otimes \Gamma(E_{M_2}) \ar[r]^-{\id \otimes f^\ast} \ar[d]_-{f_\ast \otimes \id} & \Gamma_\cc(E_{M_1}) \otimes \Gamma(E_{M_1}) \ar[d]^-{\langle\!\langle -, - \rangle\!\rangle_{M_1}} \\ 
\Gamma_\cc(E_{M_2}) \otimes \Gamma(E_{M_2}) \ar[r]_-{\langle\!\langle -, - \rangle\!\rangle_{M_2}} & \bbR 
}
\end{flalign}
in $\Vec_\bbR$ commutes. Indeed, for all $\psi \in \Gamma_\cc(E_{M_1})$ 
and $\varphi \in \Gamma(E_{M_2})$, one has 
\begin{flalign}
\langle\!\langle \psi, f^\ast \varphi \rangle\!\rangle_{M_1} = \langle\!\langle f^\ast f_\ast \psi, f^\ast \varphi \rangle\!\rangle_{M_1} = \int_{M_1} \big( \langle f_\ast \psi, \varphi \rangle_{M_2} \circ f \big)\, \vol_{M_1} = \langle\!\langle f_\ast \psi, \varphi \rangle\!\rangle_{M_2} \quad, 
\end{flalign}
where we used \eqref{eqn:pullpush=id} in the first step, 
naturality of $\langle-,-\rangle$ and the definitions~\eqref{eqn:pull-push} 
of $f^\ast$ and $f_\ast$ in the second step and naturality of the integral 
with respect to $\Loc_m$-morphisms in the last step. 
\sk

A {\it natural linear differential operator} $P$ 
between natural vector bundles $\mathsf{E}$ and $\mathsf{E}^\prime$ 
is a natural transformation 
$P: \Gamma(\mathsf{E}) \to \Gamma(\mathsf{E}^\prime)$ 
whose components $P_M: \Gamma(E_M) \to \Gamma(E^\prime_M)$ 
are linear differential operators, for all $M \in \Loc_m$. 
Explicitly, this means that, for all $f: M_1 \to M_2$ in $\Loc_m$, 
the diagram 
\begin{flalign}\label{eqn:diffop-nat}
\xymatrix@C=3.5em{
\Gamma(E_{M_2}) \ar[r]^-{P_{M_2}} \ar[d]_-{f^\ast} & \Gamma(E^\prime_{M_2}) \ar[d]^-{f^\ast} \\ 
\Gamma(E_{M_1}) \ar[r]_-{P_{M_1}} & \Gamma(E^\prime_{M_1})
}
\end{flalign}
in $\Vec_\bbR$ commutes. The latter diagram entails 
that $P$ defines also a natural transformation 
$P: \Gamma_\cc(\mathsf{E}) \to \Gamma_\cc(\mathsf{E}^\prime)$. 
Indeed, for all $f: M_1 \to M_2$ in $\Loc_m$, the diagram 
\begin{flalign}\label{eqn:diffop-nat-bis}
\xymatrix@C=3.5em{
\Gamma_\cc(E_{M_1}) \ar[r]^-{P_{M_1}} \ar[d]_-{f_\ast} & \Gamma_\cc(E^\prime_{M_1}) \ar[d]^-{f_\ast} \\ 
\Gamma_\cc(E_{M_2}) \ar[r]_-{P_{M_2}} & \Gamma_\cc(E^\prime_{M_2})
}
\end{flalign}
in $\Vec_\bbR$ commutes, as it follows from the straightforward computation 
\begin{flalign}
f_\ast\, P_{M_1} = f_\ast\, P_{M_1}\, f^\ast\, f_\ast = f_\ast\, f^\ast\, P_{M_2}\, f_\ast = P_{M_2}\, f_\ast \quad, 
\end{flalign}
where we used \eqref{eqn:pullpush=id} in the first and last steps 
and \eqref{eqn:diffop-nat} in the second step.

%%%%%%%%%%%%%%%%%%%%%%%%

\end{document}